\newcommand{\sse}{\subseteq}
\newcommand{\opt}{\ensuremath{\mathsf{Opt}}\xspace}
\def\F{{\cal F}}
\newcommand{\cC}{\mathcal{C}}
\newcommand{\cP}{\mathcal{P}}
\def\I{{\cal I}}
\newcommand{\E}{\mathbb{E}}
\newcommand{\ignore}[1]{}
\newcommand{\costsingle}{\ensuremath{\beta}\xspace}
\newcommand{\congsingle}{\ensuremath{\gamma}\xspace}
\newcommand{\load}{\ensuremath{\mathsf{load}}\xspace}
\newcommand{\asg}{\ensuremath{\mathsf{terms}}\xspace}
\newcommand{\eevrp}{\textsf{NEERP}\xspace}
\newcommand{\ncnd}{\textsf{MCNC}\xspace}
\newcommand{\scnd}{\textsf{SSNC}\xspace}
\newtheorem{lemma}{Lemma}
\newtheorem{theorem}[lemma]{Theorem}
\newtheorem{definition}[lemma]{Definition}
\newtheorem{corollary}[lemma]{Corollary}
\newtheorem{claim}[lemma]{Claim}
\newenvironment{proof}{\vspace{-0.05in}\noindent{\bf Proof:}}%
        {\hspace*{\fill}$\Box$\par}
\begin{document}

\title{Cluster Before You Hallucinate:  Node-Capacitated Network Design and Energy Efficient Routing\thanks{A preliminary version of this paper appeared in the proceedings of ACM Symposium on Theory of Computing (STOC) 2014.}}


\author{
 Ravishankar Krishnaswamy\thanks{Microsoft Research Vigyan Building, 9, Lavelle Road, Bangalore 560018. Email: ravishankar.k@gmail.com.}  \and
 Viswanath Nagarajan\thanks{Department of Industrial and Operations Engineering, University of Michigan, Ann Arbor, MI 48109, USA. Email: viswa@umich.edu. Supported in part by NSF grants CCF-2006778 and CMMI-1940766. } \and
 Kirk Pruhs\thanks{ Computer Science Department, University of Pittsburgh, Pittsburgh, 15260, PA, USA. Email: krp2@pitt.edu. Supported in part by NSF grants CCF-1907673, CCF-2036077, CCF-2209654
and an IBM Faculty Award. }  \and
 Clifford Stein\thanks{ Department of Industrial Engineering and Operations Research, Columbia University, New York, 10027, NY, USA. Email: cliff@ieor.columbia.edu. Supported in part  by NSF grant  CCF-2218677, ONR grant ONR-13533312, and by the Wai T. Chang Chair in Industrial Engineering and Operations Research at Columbia University.}
}

\maketitle

\begin{abstract}
We consider the following  node-capacitated network design problem. The input is an undirected graph, set of demands, uniform node capacity and arbitrary node costs. The goal
is to find a minimum node-cost subgraph that supports all demands  concurrently subject to the node capacities. We consider both single and multi-commodity demands, and provide the first poly-logarithmic approximation guarantees. 
For single-commodity demands (i.e., all request pairs have the same sink node), we obtain  an $O(\log^2 n)$ approximation to the cost with an $O(\log^3 n)$ factor violation in node capacities.
 For multi-commodity demands, we obtain  an $O(\log^4 n)$ approximation to the cost with an $O(\log^{10} n)$ factor violation in node capacities. We use a variety of techniques, including single-sink confluent flows, low-load set cover, random sampling and cut-sparsification. We also develop new techniques for clustering multicommodity demands into (nearly) node-disjoint clusters, which may be of independent interest.

Moreover, this network design problem has applications to energy-efficient 
virtual circuit routing. In this setting, there is a network
of routers that are speed scalable, and that may be shutdown
when idle. We assume  the standard model for power: the power consumed by a router with load (speed) $s$ is $\sigma + s^\alpha$ where $\sigma$ is the static power and the exponent $\alpha > 1$. We obtain the first poly-logarithmic approximation algorithms for this problem when speed-scaling occurs on nodes of a network. 

\end{abstract}

\section{Introduction}
Network design problems involve finding a minimum-cost subgraph of a given graph while satisfying certain demand requirements. Classic examples include Steiner  tree, Steiner forest, survivable network design and buy-at-bulk network design. Good approximation algorithms are known for all these basic network design problems~\cite{ByrkaGRS13,AgrawalKR95,GoemansW95,Jain98,GMM01,CHKS06}. However, these problems become significantly harder in the presence of capacities, and  much less is known for {\em capacitated} network design problems. 
 In this paper, we study a natural node-capacitated network  design problem and provide the first poly-logarithmic approximation algorithms for it.

In the {\em multicommodity node-capacitated network design problem} (\ncnd), there is  an
undirected graph $G = (V,E)$  where each node $v\in V$  has cost $c_v\ge 0$ and uniform capacity $q$.  
There are also $k$ request-pairs of the form $(s_i, t_i, d_i)$ where $s_i\in V$ is the source, $t_i\in V$ is the sink and $1\le d_i\le q$ is the demand.  A feasible solution is a
subset of nodes $U \sse V$ such that the graph $G[U]$ induced on
nodes $U$ (where each  node has capacity $q$) can {\em concurrently} support $d_i$ units of unsplittable flow between
$s_i$ and $t_i$ for each request-pair
$i\in[k]$.  
The objective is to minimize the total cost $c(U) :=
\sum_{v \in U} c_v$ of the solution.  Instead of requiring unsplittable flows,  one could alternatively ask for a splittable (i.e., fractional) flow  for the demands. However,   this does not change the problem significantly. In fact, our approximation guarantees also hold relative to an optimal solution for splittable  flows.

Our algorithms will find
bicriteria approximations, where the solution is allowed to violate
the capacity constraints by some factor.  A  $(\beta, \gamma)$ bicriteria approximation algorithm for \ncnd finds a solution $U$ such that (i) the cost $c(U)$ is at most $\beta$ times the optimum and (ii)  all request-pairs can be routed concurrently in $G[U]$ using capacity at most $\gamma\cdot q$ at each node.

Other than being a natural theoretical model, \ncnd has applications in energy-efficient routing. Indeed, this was our primary motivation to study \ncnd. Improving the energy efficiency of telecommunication (telecom) networks is an important practical issue. 
In their 2020 report~\cite{McKinsey} ``The case for committing to greener telecom networks" McKinsey reported that
telecom operators account for 2 to 3 percent of total global energy demand, often making them some of the most energy-intensive companies in their geographic markets. But the report noted that
all operators have considerable scope to cut energy costs and consumption, 
with many operators having the potential to reduce energy consumption by at least 15 to 20 percent.
Further improved optimization policies was listed as one of the four key energy reduction opportunities. In this paper, we consider virtual circuit routing, which is used by several network protocols to achieve
reliable communication~\cite{KR09}.

Formally, we consider virtual circuit routing protocols (where each
connection is assigned a fixed route in the network) with an objective
of minimizing energy, in a network of routers that (i) are speed
scalable, and (ii) may be shutdown when idle.  We use the standard
model for a router's power-rate curve, which is  the same as  
in~\cite{AndrewsFZZ10a,AndrewsAZ10,Medalg,AntoniadisIKMNPS14}. In this model, the energy cost incurred by a router operating at speed $x$ (which is assumed to be the total traffic passing through the router) is given by 
\begin{equation}\label{eq:energy}
f(x)=\left\{ \begin{array}{ll}
0&\mbox{ if }x=0\\
\sigma+x^\alpha& \mbox{ if }x>0
\end{array}\right..
\end{equation} 
Above, parameter $\sigma$ is the static power (that is always used when the router is turned on) and parameter $\alpha>1$ specifies the energy inefficiency of the router. The value of $\alpha$ is in the range $[1.1, 3]$ for
essentially all technologies~\cite{B-etal00,WAT09}.  We assume that all network components are homogeneous: so $\sigma$ and $\alpha$ are uniform across all routers. This is also the setting in several prior works~\cite{AndrewsAZ10,Medalg,AntoniadisIKMNPS14}.  

In this paper, we obtain the first poly-logarithmic approximation algorithms for virtual circuit routing with speed-scalable components at {\em nodes} of the network. All previous papers considered the simpler setting where speed-scaling occurred at edges. Although speed scalable edges (corresponding to network links) are plausible,
it is more realistic that speed-scaling occurs  at nodes (corresponding to routers). 

Formally, in the 
{\em node-cost energy efficient routing problem} (\eevrp), we are given  an undirected graph $G = (V,E)$, with non-negative multipliers on nodes $\{c_v \}_{v\in V}$ and a uniform energy cost function \eqref{eq:energy}.  
We are also given a collection of $k$ request-pairs of the form $(s_i, t_i, d_i)$ where for each $i \in [k]$, $s_i\in V$ is the source node, $t_i \in V$ is the sink node and $d_i\ge 1$ is the demand.   The goal in \eevrp is to find a path
$P_i$ connecting $s_i$ and $t_i$ for each $i\in [k]$ so as to minimize the overall energy cost:
$$\sum_{v\in V} c_v\cdot f\left(\sum_{i:v\in P_i} d_i\right).$$

It turns out that \eevrp reduces 
to the capacitated network design problem \ncnd, as stated in the following result. This reduction is implicit in \cite{AndrewsAZ10} where it was applied to the edge-version, and  we provide a proof in  Appendix~\ref{app:reduction} for completeness.  
\begin{theorem}[\cite{AndrewsAZ10}] \label{thm:reduction}
If there is a $(\beta,\gamma)$ bicriteria approximation algorithm  for \ncnd then there is an $O(\beta\cdot \gamma^\alpha)$-approximation algorithm for \eevrp. 
\end{theorem}

 \medskip

\noindent {\em Preliminary simplifications.} 
We refer to the set $\{s_i\}_{i=1}^k\cup \{t_i\}_{i=1}^k$ of all sources and sinks in \ncnd as {\em terminals}. 
 We assume (without loss of generality) that (i) all terminals are distinct, i.e., each node in $V$ is the source or sink of at most one request and (ii) each terminal is a leaf node, i.e.,  has degree one. This can be ensured by adding $2k$ new terminals of cost zero, where each new terminal node ($s_i$ or $t_i$) is connected only to the original terminal. So  the number of nodes $n\ge 2k$.  
In some applications, we may also have $n\gg k$ (which is common in   network design problems). So, we state our approximation ratios in terms of both $n$ and $k$. 

We also note that, without loss of generality,  {\em zero-cost}  nodes in \ncnd may have capacity that is any integral multiple of $q$. 
To see this, consider any zero-cost node $v\in V$ (i.e., with $c_v=0$) having capacity $z\cdot q$ where $z\ge 1$ is an integer.  Then, we simply introduce $z$ copies $v_1,\cdots v_z$ of node $v$, each having uniform capacity $q$ and zero cost. As all the copies have zero cost, it is clear that the two \ncnd instances are equivalent. 
We note that this reduction increases the number of nodes, but we always have $z\le k$ as the total demand in any instance is at most $kq$; so the number of nodes in the new instance is at most $nk$.

\medskip

\noindent {\em Single-sink node-capacitated network design (\scnd)}. 
We also consider separately the {\em single-sink} special case of \ncnd, where there is a common node $t\in V$ with $t_i=t$ for all $i\in [k]$. The sink node $t$ is assumed to have zero cost; this is without loss of generality as $t$  must be included in any feasible solution.  Moreover, the capacity of $t$ is $kq$ so that all demands can be routed into it. 
We also assume  that each source is a distinct leaf node.
The single-sink problem serves as a simpler setting to explain our techniques, and is also used  in the multicommodity algorithm.


\subsection{Our Results and Techniques}

Our first main result is:
 \begin{theorem}
\label{thm:single-sink}
There is an $(O(\log^{2} n), O(\log^3 n))$ bicriteria approximation algorithm for single-sink node-capacitated network design.  
\end{theorem}
In order to motivate our approach, we illustrate two corner-cases which are interesting in their own right. If the total demand is smaller than the capacity,  i.e., $\sum_{i} d_i < q$, then the problem reduces to computing a minimum-cost node-weighted Steiner tree, for which there are $O(\log k)$-approximation algorithms~\cite{KleinR94,Guha99improvedmethods}. At the other end of the spectrum, if each demand $d_i$ is large, i.e., $\min_i d_i =\Omega(q)$, then each of these requests essentially has to route its demand on a disjoint path, and this problem can be solved by using techniques from low-congestion routing \cite{RaghavanT87}. (See also Appendix~\ref{app:small-q}.) 
 Prior results for the {\em edge-capacitated} problem~\cite{Medalg,AndrewsAZ10,AntoniadisIKMNPS14} were based on a combination of these ideas, and can be summarized as follows: (i) choose an
 approximately min-cost Steiner tree  $T$ connecting all the
sources and the sink, (ii) partition  $T$ into \emph{edge-disjoint} subtrees (which we call clusters) having total demand $\approx q$ in each, (iii) choose one ``leader'' in each cluster and aggregate all demand in the cluster at the leader and (iv) route $q$ units of flow from each leader to the sink $t$ using disjoint paths. 
 The overall edge-congestion is bounded because the clusters are edge-disjoint  and  the (disjoint)  path chosen by each leader suffices to route the entire demand in that cluster (which is at most $q$). 
 A crucial ingredient in this approach is that  any tree  can be partitioned into \emph{edge-disjoint} subtrees/clusters containing $\approx q$ demand each.

However, in  the node-capacitated setting, there may not exist a {\em node-disjoint} clustering of the  minimum Steiner tree into subtrees of $\approx q$ demand each! For example, the tree $T$ could just be a star with all the sources and sink as leaves, which means that the center node will appear in every cluster. So, instead of partitioning a min-cost Steiner tree into clusters (which may not be possible), we  directly aim to find low-cost node-disjoint clusters.  However, it is not {\em a priori} clear that such a clustering must always exist. Our first step in  Theorem~\ref{thm:single-sink} is to prove the {\em existence} of node-disjoint clusters of cost at most the optimal \scnd value where each cluster has $O(\log n)\cdot q$ demand. 
 This proof relies on the existence of single-sink {\em confluent flows}~\cite{ChenKLRSV07}. In fact, we show that each such cluster can be rooted at a neighbor of the sink so that routing from each cluster to the sink is trivial.  
 Our second step in Theorem~\ref{thm:single-sink} is to {\em efficiently find} such a clustering.  
 We achieve this by formulating the single-sink clustering problem as an instance of \emph{low load set cover} \cite{BabenkoGGN12}. Here, each subtree with $O(\log n)\cdot q$  demand is a ``set''  and we need to pick a min-cost collection of sets such that the  number of sets containing any node is bounded (which will ensure approximate node-disjointness). The approximation algorithm for low load set cover from~\cite{BabenkoGGN12} requires a subroutine for the
related ``minimum ratio'' problem, for which we obtain an $O(\log n)$-approximation algorithm using the \emph{partial node weighted Steiner tree} problem~\cite{KonemannSS12,MossR07}. These details are  presented in \S~ref{sec:ss}.

Our second main result is:
\begin{theorem}
\label{thm:multicomm}
There is an $(O(\log^{2} n \log^2 k), O(\log^{6} n \log^4k))$ bicriteria approximation algorithm for  multicommodity  node-capacitated network design.  
\end{theorem}
We note that an $\Omega(\frac{\log\log n}{\log\log\log n})$ factor violation in the node-capacity is necessary for any non-trivial approximation on the cost, due to the hardness of the undirected congestion minimization problem~\cite{AndrewsZ07}.

Our high-level approach is similar to that for the single-sink case. First, we find a {\em clustering} of all source and sink nodes into nearly node-disjoint subtrees of small cost such that each cluster has at most $q\cdot polylog(n)$ demand inside. Next, we find a {\em routing} of demands across different clusters (from sources to sinks) while incurring low node-congestion. However, both these steps are significantly more complicated than the single-sink case, as outlined next. 
 
For multicommodity clustering,  as in the single-sink case, we need to prove both the existence and computation of  (nearly) node-disjoint clusters.  However, there is no multicommodity notion of confluent flow, which was used crucially in the single-sink existence proof. Moreover,  the low-load set-cover approach is not applicable either because the ``minimum ratio'' problem in the multicommodity case is at least as hard to approximate as the {\em dense-$k$-subgraph} problem, which is believed to not admit any poly-logarithmic approximation~\cite{FPK01,BCCFV10,Manurangsi17}.  We also need to modify the notion of an allowed cluster in the multicommodity case. Ideally, we would like each cluster to be ``heavy'', i.e., having demand at least $q$ (and at most $q\cdot polylog(n)$), which  is useful in the subsequent routing step. However, this  may not always be possible: so we also allow ``internal'' clusters where a constant fraction of the demand in the cluster comes from requests with {\em both} source and sink in that cluster. Then, we obtain a low cost clustering where each cluster is either heavy or internal. We also ensure that the clusters have low node congestion, i.e., each node appears in at most  $polylog(n)$ many clusters.  
Our algorithm constructs these clusters in an iterative manner, where we use the  single-sink algorithm (Theorem~\ref{thm:single-sink}) in each iteration. We start off with  each terminal being a singleton cluster, and  continue merging clusters until each cluster is either heavy or internal. Crucially, we prove that the \scnd instances solved in each iteration have low cost by producing a ``witness solution'' using the optimal \ncnd solution. We then use the \scnd solutions to merge clusters so that the number of clusters reduces by a constant factor in each iteration:  this implies that  $O(\log k)$ iterations suffice. The actual algorithm is more subtle and we only end up clustering a constant fraction of the total demand. See \S\ref{subsec:mc-cluster} for a more detailed overview of the clustering algorithm.

For multicommodity routing, we consider two cases depending on whether there are more demands in internal or heavy clusters. If a constant fraction of the demand is contained in internal clusters then we do not have to route across clusters: we just route all ``internal'' demands using the respective subtrees. The harder case is when a 
constant fraction of the demand is in heavy clusters. Here, we  find a low-cost routing across heavy clusters using a sampling/hallucination based approach from the edge-capacitated problem~\cite{AntoniadisIKMNPS14}. However, unlike the edge version \cite{AntoniadisIKMNPS14}, in the node version we need to drop some demands in the routing step. This is required to ensure that the min-cut in  the demand graph is large, which in turn is needed for the cut-sparisification result \cite{karger1999random} that we use. See \S\ref{subsec:mc-routing} for a more detailed overview of the routing algorithm.

Finally, after combining the clustering and routing steps, we obtain a solution that can support a constant fraction of the total demands. So, we need to apply these steps recursively on the remaining demands to complete the proof of Theorem~\ref{thm:multicomm}.

We also note that the approximation ratios in Theorems~\ref{thm:single-sink} and \ref{thm:multicomm} can be strengthened to be relative to an optimal {\em splittable} routing: see Appendix~\ref{app:splittable}. 

 Using Theorems~\ref{thm:single-sink} and \ref{thm:multicomm} along with the reduction in Theorem~\ref{thm:reduction}, we obtain:
 \begin{corollary}
\label{thm:intro-thm-multicast}
There is an $O(\log^{3 \alpha +2} n)$-approximation algorithm for
the  single sink node-cost energy-efficient routing problem.
\end{corollary}

\begin{corollary}
\label{thm:intro-thm-unicast}
There is an $O(\log^{10 \alpha +4} n)$-approximation algorithm for the  multicommodity node-cost energy-efficient  routing problem. 
\end{corollary}

\subsection{Related Work}

Approximation algorithms for the edge-capacitated version of \ncnd have been studied previously in ~\cite{AndrewsFZZ10a,AndrewsAZ10,Medalg,AntoniadisIKMNPS14}. The node-capacitated problem that we study is more general, and we obtain the first approximation algorithms.    
A key challenge that needs to be addressed  in these results is that the problem has similarities to both convex and concave cost flows. When the capacity $q$ is small, the \ncnd problem is similar to convex-cost flow, where it is preferable  to spread flow over disjoint paths. On the other hand, when the capacity $q$ is large, \ncnd is similar to concave-cost flow, where one prefers to aggregate flow. In \cite{AndrewsAZ10}, the authors showed that these competing forces (to spread-out or aggregate flow) can be
``poly-log-balanced'' by giving a bicriteria  poly-logarithmic
approximation algorithm for the multicommodity edge version of the problem. Moreover, 
\cite{AndrewsFZZ10a} showed  an $\Omega(\log^{1/4} n)$ inapproximability result for the edge
version, under standard complexity theoretic assumptions. 
Later, \cite{AntoniadisIKMNPS14} obtained an improved $(O(\log n), O(\log n))$ bicriteria approximation algorithm for edge-capacitated  \ncnd.  In fact, \cite{AntoniadisIKMNPS14}  also studied  the online version (where requests arrive over time) and obtained an $(O(\log n), O(\log^2 n))$ bicriteria competitive ratio. A key technique in \cite{AntoniadisIKMNPS14} was a random-sampling idea, where  each request $i$ ``hallucinates'' that it wants to route $q$ units with probability $\approx \frac{d_i}q$.  We also make use of this idea in our paper.

The \eevrp problem has also been studied
in the special case that speed scaling occurs on the edges instead of
the nodes. 
 As noted earlier, it is more realistic to have speed-scalable nodes   rather than edges.  
Presumably, the assumption in these previous papers that speed scaling occurs on the edges was motivated
by reasons of mathematical tractability, as
network design problems with edge costs are
usually easier to solve
than the corresponding problems with
node costs. 
The paper \cite{AndrewsAZ10} obtained a $\log^{O(\alpha)}n$-approximation algorithm for the edge-cost  \eevrp.  The paper \cite{Medalg} considered the single-sink special case (with edge costs) and obtained an
$O(1)$-approximation algorithm and $O(\log^{2 \alpha+1} n)$-competitive randomized
online algorithm. Later,
\cite{AntoniadisIKMNPS14} obtained a simple  $O(\log^\alpha n)$-approximation algorithm for
the multicommodity edge version, which was also extended to an
$\tilde O(\log^{3\alpha+1} n)$-competitive randomized online algorithm.

We note that the \ncnd problem is  a special case of a very general model, called {\em fixed-charge network design} that has been studied extensively in the operations research literature, see e.g. \cite{KimP99,Costa05,HewittNS13}. The focus in these papers has been on solving the problem exactly, which is different from our goal of polynomial-time approximation algorithms. 

An $O(\log k)$-approximation algorithm for the basic  {\em node-weighted} Steiner tree problem was obtained in ~\cite{KleinR94}, which is also the best possible approximation ratio (as set cover is a special case). This contrasts with the usual (edge-weighted) Steiner tree, for which constant-factor approximations are known~\cite{ByrkaGRS13}.  Our algorithm also makes use  of the \emph{partial} node weighted Steiner tree  (PNWST) problem, where we only want to connect a certain number of terminals. An $O(\log n)$ approximation algorithm for PNWST was obtained in \cite{KonemannSS12,MossR07}.

Buy-at-bulk  network design is also somewhat related to our model. Here, the cost on a network element (edge or node) is a {\em concave} function of the load through it. Poly-logarithmic approximation algorithms are known for both  edge-weighted~\cite{AA97,GMM01,CHKS06} and node-weighted cases~\cite{CHKS07,ACSZ07}. The paper \cite{Andrews04} also showed  
poly-logarithmic hardness of approximation for  
buy-at-bulk network design.
From a technical standpoint, the hallucination idea used in~\cite{AntoniadisIKMNPS14} and also in our algorithm,
is  similar to the Sample-Augment framework in~\cite{GuptaKPR07} for solving Buy-at-Bulk problems.
However, our algorithm analysis is quite different from those
for Buy-at-Bulk, and is more similar in spirit to the analysis of cut-sparsification algorithms~\cite{karger1999random,SpielmanT11,fung2011general}.

The \emph{survivable network design} problem (SNDP) is a different  (but well-studied) multicommodity network design problem. Here,
the goal is to select a minimum-cost subgraph that can route a set of
demands {\em individually}, i.e., each demand should be routable in the subgraph (just by itself).  
A $2$-approximation algorithm is known for SNDP with edge-connectivity requirements~\cite{Jain98}. 
The node-connectivity SNDP has also been studied extensively, with the best approximation ratio being  $O(k^3 \log n)$ for edge costs~\cite{CK09} and $O(k^4 \log^2 n)$ for node costs~\cite{Nutov12}; here $k$ is the largest demand.
Vertex-connectivity SNDP  is also $\Omega(k^\epsilon)$-hard to approximate~\cite{CCK08}. 
There has also been some  work on capacitated SNDP~\cite{CFLP00, CCKK11, CKLN, HKKN}. We note that capacitated SNDP differs significantly from \ncnd  because the goal in
SNDP is to route  each request-pair in isolation, whereas our goal  is to route all requests concurrently.

Very recently (after the conference version of this paper), \cite{EmekKLS20,NagarajanW21} considered the \eevrp problem with {\em non-uniform} cost functions, where the $\sigma$ and $\alpha$ parameters in \eqref{eq:energy} are different across nodes.  In fact, their results apply to a larger class of ``generalized network design'' problems, which includes multicommodity routing on directed graphs. 
 The paper \cite{EmekKLS20} gave an $O\left(\max_{v\in V} \sigma_v^{1/\alpha_v}\right)$-approximation algorithm for non-uniform \eevrp, where $\sigma_v$ and $\alpha_v$ are the cost parameters for each node (and $\alpha_v$s are constant). Moreover, \cite{NagarajanW21} obtained an online algorithm for non-uniform \eevrp with the same competitive ratio. We note that these results are incomparable to Corollary~\ref{thm:intro-thm-unicast}: we obtain approximation ratios that are poly-logarithmic in the input size ($n$ and $k$), whereas these results in \cite{EmekKLS20,NagarajanW21} have a polynomial dependence (albeit on the cost parameters). 


\newcommand{\cfinal}{K}
\renewcommand{\d}{\tilde{d}}

\section{Single-Sink Node-Capacitated Network Design} 
 \label{sec:ss}
\def\snd{{\sf SSNC}\xspace}
 The input to the  \snd problem  consists of an undirected graph $G = (V,E)$,
with $|V| = n$, and a collection of $k$ sources ${\cal D} = \{s_i \; |
i \in [k]\}$ with respective demands $\{1\le d_i \leq q \; | i \in
[k]\}$. Recall that 
 each source node has degree one. There is a specified sink $t\in V$ to which each source $s_i$
wants to send $d_i$ units of flow unsplittably. Each node $v \in
V\setminus \{t\}$ has a cost $c_v$ and uniform capacity $q$; the sink
$t$  has  zero cost and capacity $kq$ (so  all demands can be routed into it). Recall that  zero-cost nodes in \ncnd (and \scnd) are allowed to have larger capacity than $q$. 
 The
output is a subset of nodes $V' \sse V$ such that the graph $G[V']$
induced by the nodes $V'$ can concurrently support an unsplittable
flow of $d_i$ units from each source $s_i$ to the sink $t$.  The
objective is to minimize the total cost $c(V') = \sum_{v \in V'} c(v)$. We will also refer to the nodes $\{s_i\; | i
\in [k]\}$ as {\em terminals}. In our analysis, we use \opt to denote the cost of the optimal \scnd solution.

A simple but important notion is that of a single-sink cluster, defined below. 
\begin{definition}[\scnd Cluster] \label{defn:s-cluster}
A {\em cluster} is any subtree  of graph $G$  containing the sink $t$. 
The demand of the cluster is the total demand of all sources in it. 
\end{definition}

The key step in our single-sink algorithm is to find a collection of nearly node-disjoint clusters, each assigned 
roughly $q$ demand. 
An important step is to even show the existence
of such clusters, which we do in
\S\ref{subsec:ss-exists}. The existence argument is based on  {\em single-sink confluent flows}~\cite{ChenKLRSV07}.  We then give an algorithm for
finding such clusters in \S\ref{subsec:ss-find}. This algorithm relies
(in a black-box fashion) on two other results: an $O(\log n)$-approximation algorithm for {\em partial node-weighted Steiner
  tree}~\cite{KonemannSS12,MossR07}, and a logarithmic bicriteria
approximation for {\em low load set cover}~\cite{BabenkoGGN12}. At a
high level, we model a set cover instance on the graph, where any cluster is a set, and the goal is to find a minimum cost set cover of all
terminals such that no node is in too many sets. The algorithm
of~\cite{BabenkoGGN12} requires a {\em min-ratio} oracle, for which
we use the partial node-weighted Steiner tree algorithm. Finally, we just select all the nodes in the clusters as our solution. The node congestion can be bounded using the fact  that each cluster has roughly $q$ demand and that the clusters are nearly disjoint.

\smallskip
\noindent {\bf Confluent Flow. } 
 Consider any $n$-node directed graph with sink node $t$, sources $\{s_i\}_{i=1}^k$ with demands $\{d_i\}_{i=1}^k$ and uniform node capacity $q$ at all nodes except the sink (which has infinite capacity). Again, we assume that each demand is at most $q$. A flow is said to be {\em confluent} if for every node $u$ there is at most one edge $(u,v)$ out of $u$ that carries positive flow. Note that the edges carrying positive flow in any confluent flow correspond to an arborescence directed toward the sink $t$.  
\begin{theorem}[Theorem~20 in \cite{ChenKLRSV07}]\label{thm:conflu}
Consider any directed graph as above with a splittable routing $\F^*$ that sends $d_i$ units from each source $s_i$ (for $i\in [k]$) to sink $t$, while respecting node capacities.  Then, there is a  confluent flow $\F$ that routes all demands where the total flow through any node (other than $t$) is at most $(1+\ln n)q$.
\end{theorem}
 The multiple sinks referred to in \cite{ChenKLRSV07} correspond to the in-neighbors of our single sink $t$ (which are at most $n$ in number).

\subsection{Existence of Good Clustering} \label{subsec:ss-exists}
We first show that there exists a ``good'' clustering of the source nodes into node-disjoint clusters.
\begin{lemma}
\label{lem:cluster}
Given any instance of \scnd with optimal cost \opt, there exists a collection $\{ T_i  \}_{i=1}^g$ of clusters such that
\begin{enumerate}
\item[(i)] The demand of each cluster $T_i$ 
is at most $(1+\ln n)\cdot q$.
\item[(ii)] Every source lies in some cluster.
\item[(iii)] The clusters are node-disjoint except at $t$.  
\item[(iv)] The total cost is $\sum_{i=1}^g \sum_{v\in T_i} c_v \le \opt$. 
\end{enumerate}
\end{lemma}

\begin{proof}
Let $V^*\sse V$ denote the set of nodes in an optimal solution and  $\F^*$ denote an optimal flow for the sources ${\cal D}$. Note that $\F^*$ sends at most $q$ flow through each node (except $t$). We now apply Theorem~\ref{thm:conflu} on the  graph induced on $V^*$, to obtain a {\em confluent flow} $\F$ where the flow through each node (other than $t$) is at most $q(1+\ln n)$. Recall that $\F$ corresponds to an arborescence ${\cal T}$ directed toward the sink $t$. Let $\{r_i\}_{i=1}^g$ denote all neighbors of $t$ contained in arborescence ${\cal T}$. For each $i\in [g]$, let $T_i$ denote the subtree of ${\cal T}$ rooted at  $r_i$, along with the edge $(t,r_i)$. Note that $\{T_i\}_{i=1}^g$ are node-disjoint except at $t$. Moreover,  the total demand in each subtree $T_i$ is at most $q(1+\ln n)$ because all of these demands pass through node $r_i$. Finally, ${\cal T}=\cup_{i=1}^g T_i$ contains all the sources as the confluent flow $\F$ routes every demand.  

We claim that the clusters $\{T_i\}_{i=1}^g$ satisfy all the conditions in the lemma.
Conditions (i)-(iii) follow directly from the above construction.  
Condition (iv) follows from (iii) and the fact that all nodes of $T_i$ are contained in $V^*$.
\end{proof}

\subsection{Finding a Good Clustering}\label{subsec:ss-find}

The previous subsection only establishes the existence of a good
clustering; in this subsection we explain how to efficiently find such a
clustering.  

\begin{lemma}
	\label{lem:find-cluster}
There is an efficient algorithm that, for any instance of \scnd with optimal cost  \opt,  
  finds a collection of clusters $\{ T_i  \}_{i=1}^g$ such that:
	\begin{enumerate}
\item[(i)] The demand of each cluster $T_i$ 
is at most $(1+\ln n)\cdot q$.
		\item[(ii)] Every source lies in some cluster.
		\item[(iii)] Every node in $V\setminus \{t\}$ appears in at most $O(\log^2n)$ clusters.  
		\item[(iv)] The total cost is $\sum_{i=1}^g \sum_{v\in T_i} c_v \le O(\log^2n) \cdot \opt$. 
	\end{enumerate}
\end{lemma}

 Given this clustering, our final solution to the \scnd instance is just $\cup_{i=1}^g T_i$. As each cluster contains the sink $t$, there is no need for a separate routing step (from  clusters to $t$). By Lemma~\ref{lem:find-cluster} property (iv), the cost is $O(\log^2n)\cdot \opt$. Moreover, by properties (i), (ii) and (iii), all demands can be routed unsplittably with a total flow of $O(\log^3n)\cdot q$ through any node (other than $t$). This completes the proof Theorem~\ref{thm:single-sink}. It remains to prove~ Lemma~\ref{lem:find-cluster}, which will do now. Our algorithm will use an  approximation algorithm for \emph{low load set cover} (LLSC)~\cite{BabenkoGGN12},  defined next.

\smallskip
\noindent {\bf
Low load set cover (LLSC). }
In this problem,  we are given a set system $(U, {\cal C})$ with elements $U$ and sets ${\cal C}\sse 2^U$,
costs $\{c_v :v\in U\}$, and bound $p\ge 1$.
The cost of any set $S\in {\cal C}$ is $c(S) := \sum_{v\in S}c_v$, the sum of its element costs. We note that the collection ${\cal C}$ may be exponentially large and specified implicitly.
The cost of any collection ${\cal C'} \subseteq {\cal C}$ is $c({\cal C}') := \sum_{S\in {\cal C}'} c(S)=\sum_{S\in {\cal C}'} \sum_{v\in S}c_v$ the sum of its set costs.
We are also given two special subsets of elements: {\em required} elements  $W\sse U$ that need to be covered, and {\em capacitated} elements $L\sse U$. Our LLSC definition is slightly different from that in \cite{BabenkoGGN12} due to the presence of element costs and having to cover only a subset of elements. However, the algorithm and analysis from \cite{BabenkoGGN12}  extend to our formulation  in a straightforward way: see Appendix~\ref{app:llsc}. The goal is to find a minimum cost set cover ${\cal C'} \subseteq {\cal C}$ for the required elements $W$ (i.e. $\cup_{S\in {\cal C'}} S \supseteq W$) such that each capacitated element $e\in L$ appears in at most $p$ sets of ${\cal C'}$. An approximation algorithm for LLSC is given in~\cite{BabenkoGGN12}, which relies on the following subproblem. The {\em min-ratio oracle} for LLSC takes as input, non-negative element-costs $\{\eta_v:v\in U\}$ and a subset $X \subseteq W$ (of already covered required elements), and outputs a  set $S \in {\cal C}$ that minimizes $\frac{\sum_{v \in S} \eta_v}{|S\cap (W \setminus X)|}$. We use the following result on LLSC.
\begin{theorem}[\cite{BabenkoGGN12}] \label{thm:llsc}
Assuming a $\rho$-approximate min-ratio oracle, there is an algorithm for the LLSC problem, that finds a solution of cost  $O(\rho\log |U|)$ times the optimum and which covers each capacitated element  $O(\rho\log |U|)\, p$ times.
\end{theorem}
In other words, this is a bicriteria approximation algorithm that violates both the cost and capacities by an $O(\rho\log |U|)$ factor.

\medskip
\noindent {\bf The \snd problem as LLSC.} We now prove Lemma~\ref{lem:find-cluster} by casting the desired clustering problem as an instance of LLSC. The elements are the nodes $V$  of the original \snd problem. The costs $\{c_v:v\in V\}$ are the node-costs in \snd. The required elements are all the sources $W=\{s_i\}_{i=1}^k$. For any $v\in W$ its demand $d_v :=d_i$ where $v=s_i$ is the corresponding source node. The capacitated elements are $L:=V\setminus \{t\}$ and the bound $p=1$.  Let $Q:=(1+\ln n)\cdot q$. The collection ${\cal C}$ of sets is defined as follows. There is a set corresponding to each cluster  (Definition~\ref{defn:s-cluster}),  having  demand at most $Q$. To reduce notation, we   use $T$ to denote the subtree representing the cluster as well as the nodes in this cluster.   
 By Lemma~\ref{lem:cluster},  the optimal value of this LLSC instance is at most $\opt$. 
 
 Next, we provide an approximation algorithm for the min-ratio oracle for such LLSC instances. Our min-ratio algorithm relies on another known problem: 
 
\smallskip
\noindent {\bf
	Partial node weighted Steiner tree (PNWST). }
The input is an undirected graph $G=(V,E)$ with node-weights $\{\eta_v:v\in V\}$, sink $t\in V$,  rewards $\{\pi_v : v\in V\}$,  and 
target $\tau$.  Both the node-weights and rewards are non-negative.  The objective is to find a minimum node cost
Steiner tree containing $t$ having total reward at least $\tau$.  We will use the following known result.

\begin{theorem}[\cite{KonemannSS12,MossR07}] \label{thm:pnwst}
	There is an $O(\log |V|)$-approximation algorithm for the partial node weighted Steiner tree problem.
\end{theorem}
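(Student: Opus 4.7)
The plan is to derive the PNWST approximation by combining the Lagrangian-multiplier-preserving (LMP) $O(\log n)$-approximation of K\"onemann, Sadeghian, and Sanit\`a~\cite{KonemannSS12} for the node-weighted prize-collecting Steiner tree problem (NW-PCST) with the Lagrangian reduction framework of Moss and Rabani~\cite{MossR07}. Recall that NW-PCST augments our input with a penalty $\pi_v$ at each terminal and asks for a tree $T\ni r$ minimizing $\beta(T)+\sum_{v\in W\setminus T}\pi_v$. The LMP guarantee states that the returned tree $T$ satisfies $\beta(T)+\rho\sum_{v\in W\setminus T}\pi_v \le \rho\cdot\mathrm{OPT}(\pi)$ with $\rho=O(\log n)$.

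To reduce PNWST to NW-PCST, I would impose a \emph{uniform} penalty $\pi_v=\pi$ on every terminal $v\in W$. The optimum PNWST solution (covering $\ell$ terminals at cost $\mathrm{OPT}_\ell$) is a feasible NW-PCST solution of objective at most $\mathrm{OPT}_\ell + \pi(|W|-\ell)$, so the LMP inequality rearranges to $\beta(T_\pi) \le \rho\cdot\mathrm{OPT}_\ell + \rho\pi\bigl(|T_\pi\cap W|-\ell\bigr)$. I would then binary-search on $\pi$. If the search ever produces a single $\pi$ whose tree $T_\pi$ covers between $\ell$ and $2\ell$ terminals, the inequality above already yields an $O(\rho)$-approximation and we halt. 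Otherwise, the search localizes two values $\pi_1<\pi_2$, arbitrarily close, with $|T_{\pi_1}\cap W|=a<\ell$ and $|T_{\pi_2}\cap W|=b\ge\ell$.

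The final step is the combination. Taking the convex combination $\lambda T_{\pi_1}+(1-\lambda)T_{\pi_2}$ with $\lambda a+(1-\lambda)b=\ell$ and summing the two LMP inequalities (using $\pi_1\approx\pi_2$) gives the fractional bound $\lambda\beta(T_{\pi_1})+(1-\lambda)\beta(T_{\pi_2}) \le \rho\cdot\mathrm{OPT}_\ell$. To round this into an integral tree, I would follow Moss--Rabani: retain $T_{\pi_1}$ in full and augment it with a minimum-cost subtree of $T_{\pi_2}$ that supplies the remaining $\ell-a$ terminals. A tree-decomposition of $T_{\pi_2}$ rooted at its intersection with $T_{\pi_1}$ (or at $r$) shows that such an augmenting subtree exists whose node-cost is at most a constant times $\beta(T_{\pi_2})/(1-\lambda)$, yielding a total cost of $O(\rho)\cdot\mathrm{OPT}_\ell = O(\log n)\cdot\mathrm{OPT}_\ell$.

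The main obstacle I anticipate is the rounding step in the node-weighted setting. Unlike edge-weighted $k$-MST, deleting a single node to peel off a subtree can sever the connection of the remaining tree to $r$, and shared boundary nodes risk being double-charged. Overcoming this requires the careful node-partitioning argument of~\cite{MossR07}: decompose $T_{\pi_2}$ into node-disjoint subtrees hanging off a spine rooted at $r$, show that an aggregate selection of these subtrees captures the $\ell-a$ missing terminals, and argue that a greedy / knapsack-style choice loses only a constant factor against the fractional bound. Once this rounding is in hand, the guarantees assemble into the claimed $O(\log n)$-approximation for PNWST.
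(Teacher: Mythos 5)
Your proposal is essentially the paper's own route: the paper does not prove this theorem but imports it by citing the LMP $O(\log n)$-approximation for node-weighted prize-collecting Steiner tree of K\"onemann et al.\ together with the Lagrangian-relaxation reduction of Moss and Rabani, which is exactly the uniform-penalty/binary-search/two-tree-merging argument you outline. The one point to keep careful (and which you correctly defer to Moss--Rabani) is the node-weighted merging/peeling step; otherwise the argument assembles as you describe.
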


\begin{lemma} \label{lem:oracle}
There is an $O(\log n)$-approximate min-ratio oracle for the \snd clustering problem.
\end{lemma}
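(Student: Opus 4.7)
The plan is to reduce the max-density oracle to the partial node-weighted Steiner tree (PNWST) problem, for which we already have an $O(\log n)$-approximation. Given an oracle call with costs $\{\beta_v:v\in U\}$ and already-covered set $X\subseteq W$, the quantity to minimize is
\[
\rho(T') \;=\; \frac{\sum_{v\in T}\beta_v \;+\; \sum_{s_i\in T,\,j:s_i(j)\notin X}\beta_{s_i(j)}}{\bigl|\{(i,j):s_i\in T,\,s_i(j)\notin X\}\bigr|},
\]
where $T$ is a tree in $G$ and $T'\in\mathcal{C}$ is its corresponding set. The first step is to absorb the $\beta$-cost of the uncovered $W$-elements into the source vertices: set $\beta'_{s_i}=\beta_{s_i}+\sum_{j:s_i(j)\notin X}\beta_{s_i(j)}$ on each source and $\beta'_v=\beta_v$ for every non-source node. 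For each source $s_i$ with $u_i=|\{j:s_i(j)\notin X\}|\ge 1$, I attach $u_i$ zero-weight leaf terminals via zero-weight edges. Under this transformation, $\rho(T)$ becomes the cost-to-coverage ratio $\sum_{v\in T}\beta'_v / (\text{\# attached terminal leaves in }T)$, which is the standard ratio targeted by PNWST.

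To approximately minimize this ratio I would enumerate a coverage target $D\in\{2^0,2^1,\dots,2^{\lceil\log_2|W|\rceil\}}$ and, for each, invoke the $O(\log n)$-approximate PNWST algorithm to get a tree $T_D$ covering at least $D$ terminal leaves at cost within $O(\log n)$ of optimum. I run the enumeration both with no fixed root and with a root at $t$ (to cover the case where the optimal set contains the sink). The algorithm then returns the tree of best density. Standard analysis: if $T^*$ is the min-density set with coverage $D^*$ and cost $C^*$, picking $D$ to be the largest power of two with $D\le D^*$ gives a PNWST-returned $T$ with cost $\le O(\log n)\cdot C^*$ (since $T^*$ itself is feasible for target $D$) and coverage $\ge D\ge D^*/2$, so $\rho(T)\le O(\log n)\cdot\rho(T^*)$.

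The main obstacle is ensuring the returned tree $T$ actually corresponds to a set in $\mathcal{C}$, i.e., its total source-demand lies in $[q,O(q\log q)]$ (or it contains $t$). To enforce the upper bound, I would post-process $T$ via a standard DFS-based tree partition into node-disjoint subtrees each of total source-demand in $[q,2q]$; this is possible because every single source contributes demand at most $q$, so a greedy DFS that ``cuts'' whenever the accumulated demand first exceeds $q$ produces pieces of demand in $[q,2q]$. An averaging argument then ensures that at least one such subtree has density no worse than $\rho(T)$ and lies in $\mathcal{C}$. The lower bound $\ge q$ is automatic after the split. The residual subcase where the optimal $T^*$ has coverage $D^*<q$ but demand $\ge q$—where a pure coverage-based PNWST may miss it—is handled by a parallel enumeration over a demand target $d$, invoking PNWST with sources themselves (weighted by $d_i$ rather than by $u_i$) as terminals, so that the returned tree satisfies the demand lower bound directly.

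Putting it together, the reduction is exact, coverage enumeration costs a factor of $2$, PNWST costs a factor of $O(\log n)$, and the demand post-processing costs only a constant factor, yielding an overall $O(\log n)$-approximate density oracle as required.
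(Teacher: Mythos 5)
Your reduction is essentially the paper's: attach zero-weight pendant terminals for the uncovered $W$-elements of each source, run the $O(\log n)$-approximate PNWST algorithm over an enumeration of coverage targets and roots (including the sink $t$), and return the tree of best cost-to-coverage ratio — this is exactly how the lemma is proved, except that the paper enumerates every target $\ell\in[q,O(q\log q)]$ for arbitrary roots and $\ell\in[1,O(q\log q)]$ rooted at $t$ instead of powers of two, which makes the factor-$2$ loss unnecessary. The additional machinery you introduce (the DFS splitting to force the demand into $[q,2q]$, and the parallel demand-weighted PNWST branch) goes beyond the paper, which enforces membership in $\mathcal{C}$ simply by restricting the target range; of these, only the demand-weighted branch is under-specified — a single PNWST call can enforce either the uncovered-coverage target or the total-demand lower bound, but not both simultaneously while controlling the density — yet this corner case (total demand $\ge q$ but uncovered coverage $<q$) is not treated in the paper's proof either, so it does not constitute a gap relative to the reference argument.
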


\begin{proof}
 In the min-ratio oracle of the \snd clustering problem, we are given non-negative node weights $\{\eta_v:v\in V\}$ and subset $X \subseteq W$. The goal is to find:
 $$\min_{T \in {\cal C}} \,\, \frac{\sum_{v \in T } \eta_v}{|T \cap (W\setminus X)|}.$$

We will refer to the nodes $W\setminus X$ as {\em new} nodes. Our min-ratio oracle   involves solving several PNWST instances, as defined below.  

For each $\ell=1,2,\cdots |W\setminus X|$, we define an instance $\I_\ell$ of PNWST as follows. 
\begin{itemize}
\item The node-weights are $\{\eta_v:v\in V\}$. 
\item The rewards are 
$$\pi_v = \left\{ 
\begin{array}{ll}
\frac1\ell - \frac{d_v}{2Q} & \mbox{ if }v\in W\setminus X\\
0 & \mbox{ otherwise} 
\end{array}\right., \qquad \forall v\in V.$$
Note that some node-rewards may be negative.   
Let $V':=\{v\in V: \pi_v\ge 0\}$ denote the nodes with  non-negative reward. Note that all nodes in $V\setminus V'$ are leaf nodes (by our assumption that all sources are leaf nodes).  
\item The input graph $G'$ is the subgraph of $G$ induced on nodes $V'$.
\item The target reward is $\tau=\frac12$.
\end{itemize}

Let $T_\ell$ denote the tree obtained from the $\rho=O(\log n)$ approximation algorithm for the PNWST  instance $\I_\ell$. (If instance $\I_\ell$ is infeasible then $T_\ell=\mathsf{NIL}$ and we skip the following steps.) Let $N\sse W\setminus X$ denote the new  nodes covered by tree $T_\ell$ and let 
$F=\sum_{v\in N} d_v$ be their total demand. So the reward of tree $T_\ell$ is $\frac{|N|}{\ell} - \frac{F}{2Q}\ge \frac12$. In other words,
\begin{equation}\label{eq:density-calc}
|N| \ge \frac{Q+F}{2Q}\cdot \ell
\end{equation}

Note that subtree $T_\ell$ may  not be in the set-collection ${\cal C}$ as  its demand $F$ may be more than $Q$. To fix this issue, we will select a subset $N'\sse N$ of nodes in  $T_\ell$ with demand  at most $Q$, and construct a subtree $T'_\ell\in {\cal C}$ that contains $N'$. 

If $F\le Q$ then $N' = N$ and $T'_\ell=T_\ell$. Note that $T'_\ell$ is in ${\cal C}$ as its demand is at most $Q$.  Moreover, $|N'|=|N|\ge \frac{\ell}2$ by \eqref{eq:density-calc}.  So, the cost-to-coverage ratio of $T'_{\ell}$ is at most $2c(T_\ell)/\ell$.  

If $F>Q$, we do the following.  
Starting with a partition of $N$ into singletons (so each part has demand at most $Q$), repeatedly merge any two parts into one if the resulting part has total demand at most $Q$. At the end of this process, we will have $h\le \frac{2F}Q$ parts each with demand at most $Q$. We set $N'$ to be the largest cardinality part in the final partition. So $|N'|\ge \frac{|N|}{h}\ge \frac{Q}{2F} |N|\ge \frac{\ell}{4}$ by \eqref{eq:density-calc}. Further, let $T'_\ell$ be the subtree obtained from $T_\ell$ by removing nodes $N\setminus N'$. Note that $T'_\ell$ is indeed a tree because the deleted nodes $N\setminus N'\sse W$ are all leaves (by our assumption that all sources are leaf nodes). Crucially, $T'_\ell$ is in ${\cal C}$ because its demand is at most $Q$. Moreover, it covers $|N'|\ge \frac{\ell}{4}$ new nodes. So, the cost-to-coverage ratio of $T'_{\ell}$ is at most $4c(T_\ell)/\ell$.

Finally, the min-ratio oracle returns the subtree having the minimum ratio among $\{T'_\ell : 1\le \ell\le |W\setminus X|\}$. 
 We now show that this is a $4\rho$-approximation algorithm.

Let $T^*\in {\cal C}$ denote the min-ratio cluster and $\ell^*=|T^*\cap (W\setminus X)|$ be the number of new nodes in $T^*$. Then, by definition of the rewards $\{\pi_v\}$ in PNWST instance $\I_{\ell^*}$, we have $\pi(T^*)\ge \frac12$ as the total demand in $T^*$ is at most $Q$. However, $T^*$ may not itself be feasible to $\I_{\ell^*}$ as it may not be a subtree of graph $G'$ (which is restricted to the nodes $V'$). Let $T'$ be the subtree of $T^*$ obtained by restricting to the nodes $V'$ of graph $G'$; note that $T'$ is indeed a tree because all nodes $V\setminus V'$ are leaves. Moreover, the reward of $T'$ is at least that of $T^*$  as  nodes in $V\setminus V'$ have negative reward. 
So, $T'$ is a feasible solution to instance $\I_{\ell^*}$, and the optimal value of $\I_{\ell^*}$ is at most $c(T')\le c(T^*)$. Hence,    
$c(T_{\ell^*})\le \rho\cdot c(T^*)$ as $T_{\ell^*}$ is a $\rho$-approximate solution to $\I_{\ell^*}$. So, the ratio of our algorithm's solution  $T'_{\ell^*}$ is at most $\frac{4\rho}{\ell^*}c(T^*)$. 
Using $\rho=O(\log n)$ from Theorem~\ref{thm:pnwst}, we obtain the lemma.
\end{proof}

Finally, the proof of Lemma~\ref{lem:find-cluster} follows from Theorem~\ref{thm:llsc} along with Lemma~\ref{lem:oracle}.

\subsection{Good Clustering from \scnd Solution}
In our  multicommodity algorithm, we will utilize approximate solutions to   \scnd  instances to come up a good clustering. The desired properties  of this clustering are stated in Theorem~\ref{thm:ss-disj-cluster} below. 
 Informally, this result says that given any solution to an \scnd instance with some set of sources $X$ and sink $t$, we can peel out node-disjoint subtrees such that (a) the total demand in any cluster is bounded  and (b) each cluster either has at least two sources or contains a neighbor of the sink $t$. Our clustering algorithm makes use of the following known result on single-sink unsplittable flow. 
 \begin{theorem}[Theorem~3.5 in \cite{DGG}]\label{thm:unsplit}
Consider any directed graph with single sink $t$, sources $X$ having demands $\{d_s:s\in X\}$ and a splittable flow $\F'$ that sends $d_s$ units from each source $s\in X$ to sink $t$, while respecting node capacities.  Then, there is an unsplittable flow $\F$ that routes all demands where the total flow in $\F$ through any node exceeds its original flow (in $\F'$) by  at most 
$\max_{s\in X} d_s$.
\end{theorem}

\begin{theorem}\label{thm:ss-disj-cluster}
Consider any \scnd instance with source-nodes $X$, maximum demand $d_{max}$ and sink $t$. Let $N\sse V$ denote the neighbors of $t$. Suppose $V'\sse V$ is a solution of cost $B$ such that it supports the demand flow from $X$ to $t$ with maximum node capacity of $C$. Then, we can find in polynomial time, a node-disjoint collection of rooted subtrees $\{(r_j,T_j)\}_{j=1}^g$ such that:
\begin{enumerate}
\item every source node appears in some subtree.
\item each subtree $T_j\sse V'\setminus \{t\}$; so the total cost of these subtrees is at most $B$.
\item the total demand in any subtree is at most $C+d_{max}$.  
\item every subtree $T_j$ with root $r_j\not\in N$ contains at least two sources. 
\item for every subtree $T_j$ with root $r_j\not\in N$ we have  $T_j \cap N =\emptyset$.
\end{enumerate}\end{theorem}
\begin{proof}
Consider the network induced on the nodes $V'$ (from the \scnd solution) where each node has capacity $C$. By feasibility of this \scnd solution, there is a splittable flow $\F'$ that sends $d_s$ units of flow from each source $s\in X$ to sink $t$. As this is a single-sink flow, we can ensure that there are no directed cycles in $\F'$. Moreover, we can assume (without loss of generality) that every neighbor of $t$ (i.e., node in $N$) sends flow {\em only} to $t$: this is because we only have capacities at nodes. 
Applying Theorem~\ref{thm:unsplit} to $\F'$, we obtain a flow $\F$ that sends $d_s$ units unsplittably from each source $s\in X$, where the flow through each node (other than $t$) is at most $C+d_{max}$. Moreover, $\F$ does not have any directed cycles because $\F'$ doesn't. We may also assume (without loss of generality) that in $\F$, each node in $N$ (neighbors of $t$) only carries non-zero flow to $t$. Let $E'\sse E$ denote the arcs used in flow $\F$; note that $(V',E')$ is a directed acyclic graph. We index the nodes $V'$ in topological sort order with the sink $t$ having the smallest index $1$. Let $\F(s)$ be the $s-t$ path used to route demand from source $s\in X$. We construct the desired collection of trees as described in Algorithm~\ref{alg:single-comm-cluster}.
Let $g$ denote the number of trees produced. We now show that the rooted subtrees $\{(T_j,r_j)\}_{j=1}^g$ satisfy the claimed properties. 
 
\begin{algorithm}[h!]
\caption {Computing \scnd Clusters \label{alg:single-comm-cluster}}
\begin{algorithmic}[1]
\State initialize sources $Y\gets X$ and $j\gets 1$. 
\While{$Y\ne \emptyset$}
\State\label{step:single-clust-root} let root $r_j\in V'\setminus \{t\}$ denote the maximum index node that carries flow from {\em at least two} sources in $Y$.
\State let $Z_j\sse Y$ denote all remaining sources $s$ whose paths $\F(s)$ contain $r_j$.  
\State subtree $T_j$ consists of root node $r_j$, and for each source $s\in Z_j$, the $s-r_j$ prefix of path  $\F(s)$. 
 \State update $Y\gets Y\setminus Z_j$ and $j\gets j+1$.
\If{there is no root node (from $V'\setminus \{t\}$) satisfying the condition in step~\ref{step:single-clust-root}}\label{step:single-clust-nbr}
\For{each $u\in N$ (neighbor of $t$) }
\State set $r_j=u$ and $Z_j\sse Y$ is the singleton set containing the source whose path contains $u$. \Comment{{\em If there is no such source node then $Z_j=\emptyset$}}
\State subtree $T_j$ consists of root node $r_j$, and the $s-r_j$ prefix of path  $\F(s)$ for the source $s\in Z_j$.
\State update $Y\gets Y\setminus Z_j$ and $j\gets j+1$.  
\EndFor
\EndIf
\EndWhile 
\end{algorithmic}
\end{algorithm}

We first prove that the subtrees $T_j$ are node-disjoint.  
  Note that step~\ref{step:single-clust-nbr} in the while-loop only occurs when the sink $t$ is the only node containing flow from more than one source of $Y$. So, this can only happen in the last iteration.
  Consider any subtree $T_j$ produced in step~\ref{step:single-clust-root}. By the choice of root $r_j$, for each $s\in Z_j$  the portion of path $\F(s)$  from $s$ to $r_j$, is disjoint from the  paths of the remaining sources $Y\setminus Z_j$. That is, subtree $T_j$ is (node) disjoint from all subtrees $T_{j+1}, \cdots T_g$ found in later iterations. As noted above, step~\ref{step:single-clust-nbr}  only occurs in the last iteration, at which point every node in $V'\setminus \{t\}$ carries flow from at most one source of $Y$. So the subtrees produced in this step are also node-disjoint.

  It is clear that each source appears in some subtree, as the while-loop continues until $Y=\emptyset$: this proves property 1. It is also clear than each subtree $T_j\sse V'\setminus \{t\}$: combined with node-disjointness of the subtrees, we obtain  property 2.

We now bound the total demand in each subtree $T_j$. For any tree $T_j$ produced in step~\ref{step:single-clust-root}, we have  $r_j\in \F(s)$ for all $s\in Z_j$. So the  flow through node $r_j$ in the unsplittable flow $\F$ is at least $\sum_{s\in Z_j} d(s)$ the total demand in $T_j$. Using the fact that $\F$ sends at most $C+d_{max}$ flow through any node (other than $t$), the total demand in $T_j$ is at most $C+d_{max}$. As noted above, step~\ref{step:single-clust-nbr}  only occurs when every node in $V'\setminus \{t\}$ carries flow from at most one source of $Y$. So each subtree produced in step~\ref{step:single-clust-nbr}  contains at most one source, which has demand at most $d_{max}$. This proves property 3.
  
  Each subtree produced in step~\ref{step:single-clust-root} contains at least two sources: this follows from the choice of node $r_j$. All remaining subtrees (produced in step~\ref{step:single-clust-nbr}) have as their root some node of $N$ (neighbors of $t$). This proves property 4. 
  
For property 5, consider any subtree $T_j$ with root $r_j\not\in N$. Clearly, $T_j$ must be produced in step~\ref{step:single-clust-root}. Moreover, $T_j$ consists of the prefixes of certain paths until node $r_j$.  As nodes of $N$ only send flow to sink $t$ and the root $r_j\not\in N$, subtree $T_j$ does not contain any node of $N$.
\end{proof}


\section{Multicommodity Node-Capacitated Network Design}\label{sec:mc}

We now discuss the general multicommodity case of the problem. Recall that the input is an undirected
graph $G = (V,E)$ with $k$ request-pairs  $\{(s_i, t_i, d_i) \; | \; i \in [k]\}$, where the $i^{th}$ request has source $s_i$, sink $t_i$ and demand $1\le d_i\le q$. All nodes have capacity $q$. 
The output is a subset of nodes $V' \sse V$ such that the graph $G[V']$ induced by $V'$ can simultaneously support
$d_i$ units of flow (unsplittably) between nodes $s_i$ and $t_i$, for each $i\in[k]$.
The objective is to minimize the total cost $c(V') = \sum_{v \in V'} c_v$. As mentioned earlier, we assume  (without loss of generality) that all terminals are distinct and each terminal is a leaf node. 
For any terminal $s$, we define its \emph{mate} to be the unique terminal $t$ such that $(s,t)$ is a request-pair. We also use $d(s)$ to denote the demand associated with any terminal $s$; so we have $d(s_i)=d(t_i)=d_i$ for all $i\in [k]$.

\medskip {\bf Roadmap:} 
Our algorithm first clusters the terminals into nearly node-disjoint subtrees 
 of low total cost.  We need a new notion of ``allowed clusters'' in the multicommodity case, and the clustering algorithm is based on iteratively solving  several instances of the single-sink problem (\scnd). The details appear in  \S\ref{subsec:mc-cluster}. Next, the algorithm routes demands across different clusters while respecting node capacities. The routing algorithm relies on random-sampling and cut-sparsification: see \S\ref{subsec:mc-routing} for details. After combining the inter-cluster routing with the clusters themselves, we are able to route a {\em constant fraction} of the demands with small node congestion. Finally, we need to apply the above clustering and routing algorithms recursively on all unsatisfied demands: so we  repeat the main algorithm a logarithmic number of times.

\subsection{Clustering\label{subsec:mc-cluster}}
Here, we describe the multicommodity clustering algorithm that finds a collection of nearly disjoint clusters, where each cluster has either a large fraction of ``induced'' demands ({\em internal} clusters) or a large amount of ``crossing'' demands ({\em heavy} clusters). During our clustering algorithm, we will drop some request-pairs and maintain a {\em current} set of requests $K$. At any point in the algorithm, the terminals are the sources/sinks of {\em only} the current request-pairs.  We will ensure that requests remaining at the end of the clustering algorithm have a constant fraction of the total demand $D:=\sum_{i=1}^k d_i$.

\begin{definition}[\ncnd Cluster]
\label{def:mccluster} Let $K\sse[k]$ denote a subset of requests. The following definitions are relative to $K$, where the nodes $\{s_i,t_i\}_{i\in K}$ are called terminals.  
A cluster is any subtree $T$ in graph $G$. 
\begin{itemize}
\item The set of terminals contained in cluster $T$ is denoted $\asg(T)$. 
\item  The demand of cluster $T$ is $\load(T)=\sum_{s\in \asg(T)} d(s)$, the sum of demands over all its terminals. 
\item A terminal $s\in \asg(T)$ is called {\bf internal} if its mate is also in $\asg(T)$; the terminal $s$ is called {\bf external} otherwise. \item The internal (resp. external) demand of cluster $T$ is 
the total demand of its internal (resp. external) terminals.  
\end{itemize}
\end{definition}
Note that ``internal requests'' (with both source and sink  in $T$) contribute twice to $\load(T)$, whereas ``external requests'' (with exactly one terminal in $T$) contribute just once to $\load(T)$. 

\begin{definition}[Cluster Categories]\label{def:cluster-type} Let $K\sse[k]$ be a subset of requests and $T$ be any cluster. Then, $T$ is said to be 
\begin{itemize}
\item {\bf heavy} if its demand $\load(T)$ is at least $q$.
\item  {\bf internal}  if its internal demand is more than $\load(T)/2$.  
\item {\bf active} if it is neither internal nor heavy. 
\end{itemize}
\end{definition}

 We note that some clusters may be both internal and heavy. In our algorithm, we explicitly maintain collections of different cluster-types, and any ties will be broken according to the algorithm.

We will maintain and grow active clusters until all clusters are heavy or internal. We further classify active clusters into two types depending on how much of its external demand goes to other active clusters. This distinction is important because the algorithm needs to deal with these clusters differently.
\begin{definition}[Active Cluster Types]
\label{def:active-cluster}
Let $K\sse[k]$ be a subset of requests and  $T$ an active cluster. $T$ is a {\bf type 1} active cluster if the total demand of terminals in $T$ with their mates in other active clusters is less than $\load(T)/4$. Otherwise,  $T$ is  a {\bf type 2} active cluster. Moreover, a type 1 active cluster is called {\bf dangerous} if it has non-zero demand going to other active clusters. 
\end{definition}
We will refer to type 1 and type 2 active clusters as t1-active and t2-active, respectively. Note that the total external demand of  any active cluster $T$ is at least $\load(T)/2$: otherwise, $T$ would be an internal cluster (not active).  So, any t1-active cluster has at least $\load(T)/4$ demand crossing to internal/heavy clusters. Moreover, any t1-active cluster that is not dangerous has {\em all} 
its external demands going to internal/heavy  clusters.

We can now state the main multicommodity clustering result.

\begin{theorem} \label{thm:clustermc}
Suppose that there is a $(\costsingle, \congsingle)$-bicriteria approximation algorithm for the single-sink problem (\scnd). Then, for any \ncnd instance with optimal cost \opt, there is a polynomial-time algorithm that finds a subset $K\sse [k]$ of request-pairs and a collection ${\cal \widehat T}$ of  clusters (relative to $K$) such that:
\begin{enumerate}
\item[(i)] Each cluster in ${\cal \widehat T}$ is internal or heavy.
\item[(ii)] Each terminal (i.e. node in $\{s_i,t_i\}_{i\in K}$) lies in exactly one cluster.
\item[(iii)] The total demand of request-pairs $K$ is $\sum_{i\in K} d_i\ge D/4$.
\item[(iv)] Each node appears in  at most $O(\log  k)$ different clusters.
\item[(v)] The demand of each cluster is at most $O(\congsingle^2 \log k)\cdot q$.
\item[(vi)] The total cost of all the clusters 
$$\sum_{T\in {\cal \widehat T}}  \sum_{v\in T} c_v   \le   O(\costsingle \cdot \log k) \cdot \opt.$$
\end{enumerate}
\end{theorem}

\def\T{\ensuremath{{\cal T}}\xspace}
\def\N{\ensuremath{{\cal N}}\xspace}

\def\ti{\ensuremath{{\cal T}_i}\xspace}
\def\th{\ensuremath{{\cal T}_h}\xspace}
\def\tao{\ensuremath{{\cal A}_1}\xspace}
\def\tat{\ensuremath{{\cal A}_2}\xspace}
\def\mcc{\ensuremath{\mathsf{MCcluster}}\xspace}
\def\mto{\ensuremath{\mathsf{MergeT1}}\xspace}
\def\mtt{\ensuremath{\mathsf{MergeT2}}\xspace}
\def\cop{\ensuremath{9\congsingle}\xspace}

\def\tf{\ensuremath{{\cal T}_f}\xspace}
\def\ts{\ensuremath{{\cal T}_s}\xspace}
\def\tu{\ensuremath{{\cal T}_u}\xspace}

\def\tux{\ensuremath{\tu^\times}\xspace}

\medskip
\noindent {\bf Overview of algorithm/analysis.}  
We start with each terminal being its own cluster. The clustering algorithm aims to find clusters with $\approx q$ terminals in each: these aggregated demands can then be handled in the routing step of our algorithm. This motivates the definition of {\em heavy} clusters, which contain at least $q$ demand (see Definition~\ref{def:cluster-type}). In order to obtain such heavy clusters, the algorithm iteratively merges the non-heavy clusters. Moreover, to ensure that there is a low-cost solution to this ``merging'' step, we need each cluster to have a constant fraction of its demand going to {\em other} clusters: otherwise, the cost to merge may be much more than the optimal \ncnd cost (denoted by \opt). This motivates the definition of {\em internal} clusters, where a constant fraction of the demand is induced inside the cluster (see Definition~\ref{def:cluster-type}). While internal clusters can not participate in  merging anymore, we can use the subtree inside such clusters to route all its internal demand (which is a constant fraction of its total demand). So, both heavy and internal clusters are ``good'' in the sense that we will be able to satisfy a constant fraction of their demand in the subsequent routing step.  Therefore, the revised aim of the clustering algorithm is to find a collection of heavy {\em or} internal clusters. All other clusters are called {\em active}, which the algorithm continues to merge.

The clustering algorithm proceeds in iterations. Each iteration attempts to merge active clusters, and ensures that the number of active clusters reduces by a constant factor. So, the number of iterations will be at most $O(\log k)$. 
The merging step in each iteration is based on solving suitable instances of the {\em single-sink} problem \scnd. We will ensure  that the optimal value of each \scnd instance is at most \opt: so the final cost of our clusters will be $O(\log k)\cdot \opt$.  
When multiple active clusters merge together in any iteration, the resulting cluster  may be internal, heavy or active (we make progress in all cases). However,  the new cluster may have demand  much more than $q$, as we cannot control how many active clusters get merged into one. In order to handle this issue, we ensure that the \scnd instances have node-capacity $\approx q$: the \scnd approximation guarantee then implies that the demand in any new cluster is at most $\approx \congsingle\cdot q$. (In some cases  we will have slightly larger node capacities in \scnd, as explained later).  

Another issue to handle is that we may be unable to merge active clusters just with each other. In particular, it is possible that a large fraction of an active cluster's demand goes to heavy/internal clusters. Then, we cannot expect to merge such a cluster with other active clusters. This motivates the classification of active clusters into types 1 and 2, corresponding to having low/high fraction of its demand to other active clusters (see Definition~\ref{def:active-cluster}). The two types of active clusters are dealt with separately. Initially, each singleton cluster is t2-active. Assume for now that there are no requests between active clusters of different types. We will explicitly ensure this property by {\em dropping some requests} and preserving only a subset $K\sse [k]$ of requests.  
\begin{itemize}
\item {\em Merging t2-active clusters.} Intuitively, these active clusters can be merged with each other (at low cost) because most of their demands go to other  active clusters. We would like to construct an \scnd instance ${\cal I}_2$, where each t2-active cluster is a source. 
 However, the original \ncnd requests are multicommodity; so, even requests associated with one cluster do not have a common sink.  We get around this issue by restricting attention to a ``bipartite demand graph'' containing at least half the total demand between t2-active clusters. This corresponds to finding an appropriate bi-partition 
$({\cal A}^+, {\cal A}^-)$ of t2-active clusters, which can be done using  a simple local search. Now, we  treat all clusters in 
part ${\cal A}^+$ as sources and connect all clusters in ${\cal A}^-$ to a new sink node $t$. All nodes have capacity $\approx q$. See Algorithm~\ref{alg:t2-merge} for the formal description. In the analysis, we need to show that the optimal cost of this \scnd instance is at most \opt. This is done by demonstrating a   (fractional) routing 
based on the optimal \ncnd solution and using the fact that at least half the total demand in t2-active clusters are ``crossing'' between ${\cal A}^+$ and ${\cal A}^-$.  Finally, we obtain a collection of subtrees (using the  approximate \scnd solution and Theorem~\ref{thm:ss-disj-cluster}) that are used to merge t2-active clusters.

\item {\em Merging  t1-active clusters.} These clusters have a constant fraction of their demands going to heavy/internal clusters. If any requests from a t1-active cluster go   to an internal cluster  then we simply drop these requests and ``charge'' them to the requests in internal clusters  which will definitely be preserved. Note that demand induced in any internal cluster is at least a constant fraction of its total demand. Moreover, the clustering algorithm only aims to preserve  a subset $K$ of requests (which should be a constant fraction of the total demand). 
So,  we are left with the case that a 
large fraction of demand from any t1-active cluster goes to heavy clusters. Intuitively, these clusters can be merged with heavy clusters  at low cost.  We now construct an \scnd instance ${\cal I}_1$  to merge t1-active clusters with heavy clusters (or each other). We  treat each t1-active cluster as a source, and connect all heavy clusters to a new sink $t$. All nodes have capacity $\approx q$ except the nodes corresponding to heavy clusters, which have capacity $O(\congsingle\cdot q)$. The reason that we  have  a larger capacity  for heavy clusters is that the demand in a heavy cluster can be as large as $O(\congsingle\cdot q)$ and the \ncnd routing (which is  used to demonstrate a low-cost \scnd routing) induces a corresponding load on these clusters. 
See Algorithm~\ref{alg:t1-merge} for the formal description. Again, we merge clusters 
using the  approximate \scnd solution and Theorem~\ref{thm:ss-disj-cluster}. \end{itemize}

The demand of any new cluster formed when active clusters merge with each other is bounded by $O(\congsingle \cdot q)$ as all nodes in these clusters have capacity $O(q)$ and $\gamma$ is the capacity violation in our \scnd algorithm. However, when t1-active clusters merge with an existing heavy cluster $H$, the demand of $H$ may increase by as much as $O(\congsingle^2 \cdot q)$: this is because nodes corresponding to heavy clusters  have capacity $O(\congsingle\cdot q)$. Unfortunately, this increased demand in $H$ may multiply over iterations. Specifically, if most of the new requests in $H$ are going to other active clusters then the \scnd instance ${\cal I}_1$ in the next iteration must increase the capacity of $H$ from $\congsingle\cdot  q$ to  $ \congsingle^2\cdot   q$; this is needed to demonstrate a low cost solution to ${\cal I}_1$. So, the capacity (and hence the demand) of cluster $H$ will keep increasing by a multiplicative factor $\congsingle$ in each iteration! In order to fix this issue, we will ensure that all t1-active clusters in \scnd instance ${\cal I}_1$ have {\em zero} demand going to other active clusters (again, this property will be ensured by dropping certain requests). This   motivates the definition of {\em dangerous} clusters, which are t1-active clusters having any request going to other active clusters (Definition~\ref{def:active-cluster}). We will ensure that instance  
${\cal I}_1$  has no dangerous clusters. Now, when t1-active clusters merge with a heavy cluster  $H$, the demand between $H$ and active clusters {\em does not} increase. So, the capacity of  $H$ in instance ${\cal I}_1$ of the next iteration can remain  $O(\congsingle\cdot  q)$. We note that  the total demand in $H$ still increases {\em additively} by $O(\congsingle^2\cdot q)$ in each iteration. As there are only $O(\log k)$ iterations, the final demand of any heavy cluster can be bounded by $O(\log k \cdot \congsingle^2\cdot  q)$. 

{\em Dropping requests.} In the above description, we assumed that there are no requests between (i) active clusters of different types, (ii) internal clusters and active clusters, and (iii) t1-active clusters and t1- or t2-active clusters. As mentioned earlier,  to ensure these properties, our algorithm drops certain requests  and preserves only a subset $K\sse [k]$. Specifically, in each iteration  we perform  a ``pruning step'' after merging active clusters. This involves repeatedly choosing a (new) cluster $T$ that is either internal or dangerous, and dropping all requests between $T$ and other active clusters. In the analysis we will show that the dropped requests can be ``charged'' to preserved requests in $K$, so that the final set $K$ is a constant fraction of the total demand in \ncnd. No request incident to a heavy cluster is ever dropped: so heavy clusters do not change in this pruning step.  We note that dropping requests incident to cluster $T$ may cause another active cluster $T'$ to become internal or dangerous: then, cluster $T'$ will also be processed later in this pruning step. Therefore, at the end of each iteration, we ensure the three properties (i)-(iii). We emphasize that the cluster definitions are all relative to the current set $K$ of requests.

\medskip
Algorithm~\ref{alg:cluster-mc} describes the overall clustering algorithm \mcc. We maintain separate collections of clusters: \ti (internal), \th (heavy), \tao (t1-active) and \tat (t2-active). Algorithm~\ref{alg:t1-merge} (\mto) and Algorithm~\ref{alg:t2-merge} (\mtt) describe the separate procedures to merge t1-active and t2-active clusters. 
 Figures~\ref{fig:G1} and \ref{fig:G2} illustrate the graphs $G_1$ and $G_2$ used in the two \scnd instances ${\cal I}_1$ (solved in \mto) and  
${\cal I}_2$ (solved in \mtt).

\begin{algorithm}[h]
\caption {\ncnd Clustering Algorithm \mcc \label{alg:cluster-mc}}
\begin{algorithmic}[1]
\State initialize $K = [k]$ to be all request-pairs and the clusters are all singletons.  
\State all clusters are t2-active, i.e., $\tat \gets \{\{s_i\},\{t_i\}\}_{i\in K}$, $\ti\gets \emptyset$, $\th\gets \emptyset$, $\tao\gets \emptyset$.
\While{some cluster is active ($\tao\cup \tat\ne \emptyset$)}\label{step:cluster-iter1}\Comment{{\em Iteration begins}}
\State   run algorithm $\mto(\th, \tao)$ which modifies t1-active and heavy clusters.
\State   run algorithm $\mtt(\tat)$ which modifies t2-active  clusters.
\For{all clusters $T$ in $\tao\cup \tat$} \Comment{{\em Identify heavy clusters}}
\State  if $T$ is heavy then move it from $\tao$ or $\tat$ to $\th$. \label{step:cluster-heavy}
\EndFor 
\While{some cluster $T$ in $\tao\cup \tat$ is dangerous or internal }  \label{step:cluster-prune1} 
\Statex \Comment{{\em Identify internal clusters and ensure no dangerous clusters}}
\State   \label{step:cluster-remove-req} remove  from $K$ all requests between $T$ and other active clusters.
\State  if $T$ is internal (resp. t1-active) then move it  to $\ti$ (resp. \tao).  
\EndWhile \label{step:cluster-prune2} 
\State  if any cluster $T\in \tat$ is t1-active then move it to \tao. \label{step:cluster-prune3}  
\EndWhile \label{step:cluster-iter2}  
\end{algorithmic}
\end{algorithm}

\begin{algorithm}[h!]
\caption {Merging algorithm for t1-active clusters $\mto(\th, \tao)$ \label{alg:t1-merge}}
\begin{algorithmic}[1]
\State let graph $G_1$ consist of graph $G$ and the following new nodes/edges:
\begin{itemize}
\item For each cluster $T\in \tao$,
there is a source node $s_T$ (of zero cost) with demand $d(s_T) = \load(T)$; node $s_T$ is connected to each terminal in $T$. 
\item For each heavy cluster $F\in \th$, there is a new node $v_F$ (of zero cost), which is connected to every terminal in $F$.
\item There is a new sink node $t$, connected to the nodes $\{v_F:F\in \th\}$.
 \end{itemize} 
\State The node capacities are $\tilde{q}:=5q$. For the zero-cost nodes $v_F$ corresponding to heavy clusters $F\in \th$, we set their capacities to $\cop\cdot \tilde{q}$, where $\congsingle$ is the approximation ratio for node-congestion from our single-sink algorithm. 
\State  let ${\cal I}_1$ be the \scnd instance on graph $G_1$, with sources $\{s_T\}_{T\in \tao}$ and sink $t$.  
\State  solve instance ${\cal I}_1$  using the \scnd approximation algorithm (Theorem~\ref{thm:single-sink}) to obtain solution $V_1\sse V$.
\State  using Theorem~\ref{thm:ss-disj-cluster} on solution $V_1$,  obtain a collection of rooted subtrees ${\cal N}_1$.
\For{each subtree $X$ (with root $r$)  in ${\cal N}_1$} \Comment{{\em Merge clusters using ${\cal N}_1$}} 
\State  let ${\cal X} =\{T\in \tao: s_T\in X\}$ be the t1-active clusters whose source-nodes are contained in  subtree $X$ 
\State  update $\tao\gets \tao\setminus {\cal X}$
\If{root $r$ corresponds to a heavy cluster $F\in \th$ (i.e., $r=v_F$)}
\State  \label{step:heavy-merge} add clusters ${\cal X}$ and subtree $X$ to heavy cluster $F$, i.e. $F\gets F\cup X\cup {\cal X}$ 
\Else
\State \label{step:t1-merge} merge  clusters ${\cal X}$ using subtree $X$ to get new cluster $S=X\cup {\cal X}$
\State  update $\tao\gets \tao\cup \{S\}$ 
\Statex \Comment{{\em $S$ may not be t1-active: its  type will be updated in \mcc}}
\EndIf
\EndFor   
\end{algorithmic}
\end{algorithm}
\vspace{-4mm}

\begin{algorithm}[h!]
\caption {Merging algorithm for t2-active clusters $\mtt(\tat)$  \label{alg:t2-merge}}
\begin{algorithmic}[1] 
\State \label{step:t2-partn} partition clusters $\tat$ into ${\cal A}^+$ and ${\cal A}^-$ such that every cluster in ${\cal A}^+$ (resp. ${\cal A}^-$) has more demand going to clusters in ${\cal A}^-$ (resp. ${\cal A}^+$) than ${\cal A}^+$ (resp. ${\cal A}^-$)
\Statex \Comment{{\em This can be done by a simple local search}}
\State relabel the parts so that $|{\cal A}^+| \ge |{\cal A}^-|$.
\State let graph $G_2$ consist of graph $G$ and the following new nodes/edges:
\begin{itemize}
\item For each cluster $T\in {\cal A}^+$,
there is a  source node $s_T$ (of zero cost)  with demand $d(s_T) = \load(T)$; node $s_T$ is connected to  each terminal of $T$.
\item For each cluster $W\in {\cal A}^-$, there is a new node $v_W$ (of zero cost), connected to every terminal of $W$.
\item There is a new sink node $t$, connected to the nodes $\{v_W:W\in {\cal A}^-\}$.
\end{itemize}
\State The node capacities are $q' =9q$.
\State let ${\cal I}_2$ be the \scnd instance on graph $G_2$, with sources $\{s_T\}_{T\in {\cal A}^+}$ and sink $t$.  
 \State  solve instance ${\cal I}_2$  using the \scnd approximation algorithm (Theorem~\ref{thm:single-sink}) to obtain solution $V_2\sse V$.
\State using Theorem~\ref{thm:ss-disj-cluster} on solution $V_2$,  obtain a collection of rooted subtrees ${\cal N}_2$.
\For{each subtree $Y$  in ${\cal N}_2$} \Comment{{\em Merge clusters using ${\cal N}_2$}} 
\State  let ${\cal Y} =\{T\in {\cal A}^+ : s_T\in Y\}\bigcup \{W\in {\cal A}^- : v_W\in Y\} $ be the t2-active clusters whose $s$-nodes or $v$-nodes are contained in  subtree $Y$ 
\State update $\tat\gets \tat\setminus {\cal Y}$.
\State \label{step:t2-merge}  merge  clusters ${\cal Y}$ with each other to get new cluster $S=Y\cup {\cal Y} $
\State update $\tat\gets \tat\cup \{S\}$ \Comment{{\em $S$'s type will be updated in \mcc}}
\EndFor   
\end{algorithmic}
\end{algorithm}

\medskip

\noindent {\bf Analysis.}  
Our first lemma shows that all clusters are classified correctly (relative to the current requests $K$) at the end of each iteration of \mcc. During the iteration, new or modified  clusters may be in the wrong  collection: but these will get fixed at the end (as shown in the next lemma).  
\begin{lemma}\label{lem:mc-clust-class}
At the end of each iteration of Algorithm \mcc, the current set of clusters are correctly classified into 
\ti (internal), \th (heavy), \tao (t1-active) and \tat (t2-active). 
\end{lemma}
\begin{proof}
At the beginning of the algorithm, each cluster is a singleton terminal. Clearly, these are t2-active clusters: so the initial classification is correct.

Algorithm \mcc first runs \mto and \mtt to merge t1-active and t2-active clusters separately. 
\mtt creates new clusters by merging t2-active clusters, and places all new clusters in \tat. \mto modifies existing heavy clusters (which remain in \th) and creates new clusters by merging t1-active clusters, which are placed in \tao.  Then, in Step~\ref{step:cluster-heavy} of \mcc,  we first identify any new heavy clusters (in $\tao\cup\tat$) and move them to \th. At this point, all heavy clusters are classified correctly. Moreover, \mcc never removes requests incident to a  heavy cluster (see Steps~\ref{step:cluster-prune1}-\ref{step:cluster-prune2}). So, once a cluster is heavy, it will remain heavy throughout the algorithm. 

Next, in Steps~\ref{step:cluster-prune1}-\ref{step:cluster-prune2} of \mcc, we repeatedly identify dangerous/internal clusters and  drop some requests (which in turn can modify other clusters).  We now argue that the clusters are classified correctly after this step:
\begin{itemize}
\item Suppose an  internal cluster $T\in \tao\cup\tat$ is processed in Step~\ref{step:cluster-prune1}. We remove all requests from $T$ to other active clusters. This keeps $T$ as an internal cluster, and it is classified correctly. Moreover, we never remove any internal requests, and we never modify an internal cluster after it is assigned to \ti. 
\item Suppose a dangerous cluster $T\in \tao\cup\tat$ is processed in Step~\ref{step:cluster-prune1}. We remove all requests from $T$ to other active clusters.   
As a result, $T$ will become either internal (if its internal demand is more than $\load(T)/2$) or  non-dangerous t1-active (if its internal demand is at most $\load(T)/2$); note that $T$ cannot become heavy or t2-active. In either case, $T$ is classified correctly and it will not be modified later in this iteration.
\item Consider now any cluster $T'\in \tao$ that is {\em not} processed in Step~\ref{step:cluster-prune1}. Then,  $T'$ is not heavy or internal  (or even dangerous). So it must be active. Also, it must have formed in algorithm \mto as a result of merging some t1-active clusters and such a cluster cannot be t2-active: its demand going to other active clusters will be less than $\load(T')/4$. Further, when requests are dropped, a t1-active cluster cannot become t2-active because the only requests dropped from $T'$ are those going to other active clusters. So $T'$ remains t1-active at the end of this iteration.
\item Consider now any cluster $T'\in \tat$ that is {\em not} processed in Step~\ref{step:cluster-prune1}. Again,  $T'$ must be active and non dangerous. It could be either t1-active or t2-active, and is classified correctly in Step~\ref{step:cluster-prune3}.
\end{itemize}
\end{proof}

\begin{figure}[h]
    \centering
    \begin{minipage}{.55\textwidth}
        \centering
        \includegraphics[width=0.9\linewidth, height=0.25\textheight]{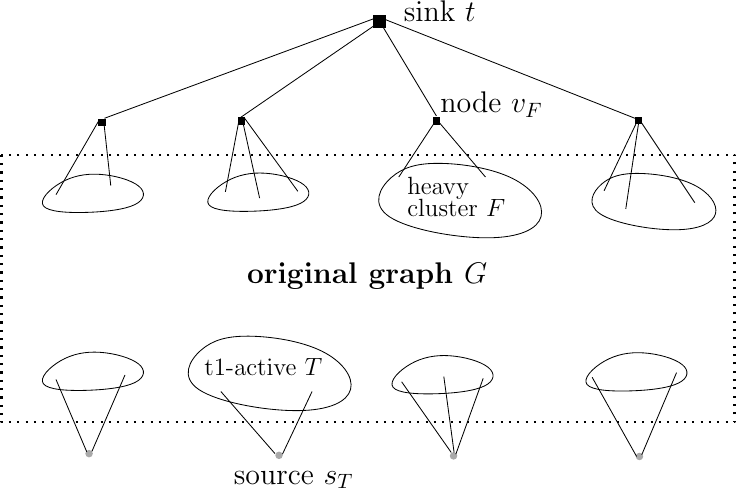}
        \caption{Graph $G_1$ for t1-active clusters.}
        \label{fig:G1}
    \end{minipage}%
    \begin{minipage}{0.45\textwidth}
        \centering
        \includegraphics[width=0.9\linewidth, height=0.25\textheight]{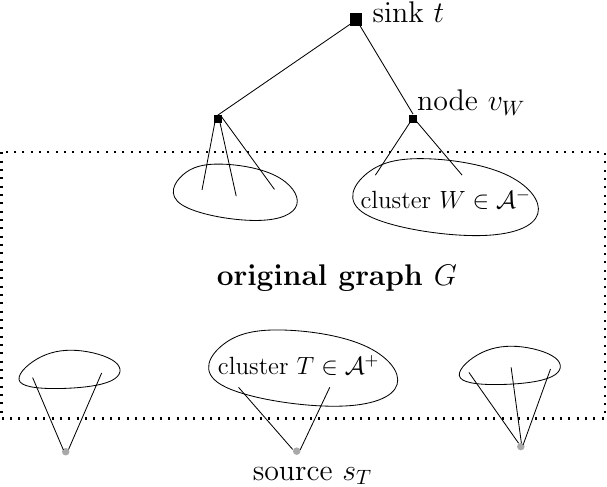}
        \caption{Graph $G_2$ for t2-active clusters.}
        \label{fig:G2}
    \end{minipage}
\end{figure}

We now observe how cluster types change during algorithm \mcc. 
The classification of a cluster (its assignment to \th, \ti, \tao or \tat) changes due to  merging the cluster with other clusters (which occurs in algorithms \mto and \mtt), or  removal of requests from $K$ (which occurs in algorithm \mcc).  
We note that these changes satisfy following:
\begin{itemize}
\item 
Any cluster in  \ti  will remain in  \ti.   
\item 
Any cluster in \th will be (part of) some  cluster of \th.  
\item Any cluster in \tao will be (part of) some  cluster of  $\tao\cup \ti\cup\th$.
\item Any cluster in \tat will be (part of) some  cluster of $\tat\cup \tao\cup\ti\cup\th$.
\end{itemize}

Next, we  show that requests (in the current set $K$) between different kinds of clusters satisfy some useful properties. These properties are crucial in setting up the \scnd instances correctly and ensuring that the demand of heavy clusters does not grow too much.  Figure~\ref{fig:mc-cluster} illustrates the possible requests across clusters.

\begin{figure}[h]
\centering
\includegraphics[scale=0.99 ]{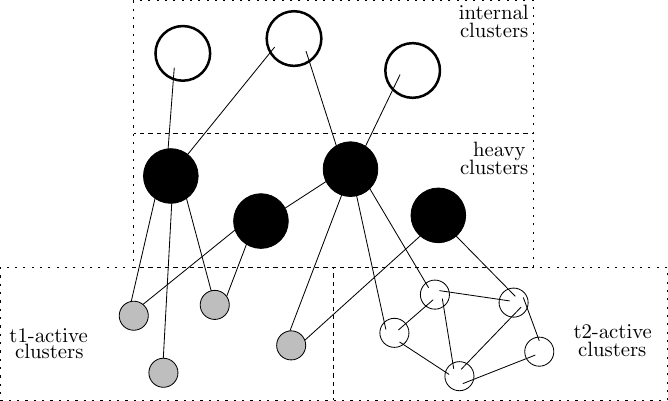}
\caption{Different types of clusters at the end of an \mcc iteration. The edges represent request-pairs (in $K$) across clusters.  \label{fig:mc-cluster}}
\end{figure}

\begin{lemma}\label{lem:mc-cluster-structure} At the start/end of any iteration of algorithm \mcc, there are no requests (in $K$) between:
\begin{itemize}
\item internal clusters \ti and active clusters $\tao\cup \tat$.
\item any t1-active cluster $T\in\tao$ and other active clusters $\tao\cup \tat \setminus \{T\}$.
\end{itemize}
Moreover, the demand from any heavy cluster to active clusters is at most $10\congsingle\cdot q$. 
\end{lemma}
\begin{proof}
We will prove this inductively over the iterations. The lemma is clearly true at the beginning of \mcc (all clusters are t2-active).
Now consider any iteration of \mcc: assuming the lemma at the start of the iteration, we prove that it also holds at the end.  

For the first property, observe that \mcc does not modify any existing internal cluster: so any cluster that was internal at the start of the iteration continues to satisfy this property. Let $I$ be a (new) cluster that is found to be internal in this iteration. 
This must happen in Steps~\ref{step:cluster-prune1}-\ref{step:cluster-prune2}, and we explicitly remove all requests from $I$ to other active clusters at this point.

For the second property, let $T\in \tao$ be any t1-active cluster at the end of the iteration. By the loop condition in  Steps~\ref{step:cluster-prune1}-\ref{step:cluster-prune2} of \mcc, $T$ cannot be dangerous. So, the demand from $T$ to other active clusters is zero, as desired.

For the third property, we consider two cases for any heavy cluster $H\in \th$:
\begin{itemize}
\item {\em $H$ is a new  heavy cluster.} Then, $H$ must have formed due to merging t1-active (resp. t2-active) clusters in algorithm \mto (resp. \mtt). See Step~\ref{step:t1-merge} in \mto and Step~\ref{step:t2-merge} in \mtt. If $H$ was formed in \mto then it was based on \scnd instance ${\cal I}_1$, where node capacities are $\tilde{q}=5q$. Using Theorems~\ref{thm:single-sink} and \ref{thm:ss-disj-cluster}, it follows that the demand of cluster $H$ must be at most $\congsingle\cdot \tilde{q} +  \tilde{q} \le 10\congsingle\cdot q$. If $H$ was formed in \mtt, the analysis is similar-- this time, using \scnd instance  ${\cal I}_2$ which has node capacity $q'=9q$.  Using Theorems~\ref{thm:single-sink} and \ref{thm:ss-disj-cluster} again, the demand of cluster $H$ is at most $\congsingle\cdot q' +q' = (9\congsingle + 9)\cdot q\le 10 \congsingle\cdot q$. Here, we used $\congsingle\ge 9$ because $\congsingle$ is poly-logarithmic.  In either case, the total demand in cluster $H$ is at most $10\congsingle\cdot q$, which also bounds the demand from $H$ to active clusters. 
\item {\em $H$ is an existing  heavy cluster. } Then, $H$ will be modified in algorithm \mto based on \scnd instance ${\cal I}_1$. Here, only t1-active clusters get added to cluster $H$: see Step~\ref{step:heavy-merge} in \mto. Crucially, there is {\em zero} demand from any t1-active cluster to other active clusters (by the second property in this lemma). So the  demand from cluster $H$ to  active clusters does not increase, and it remains at most  $10\congsingle\cdot q$. 
\end{itemize}
\end{proof}

We are now ready to bound the  optimal values of the \scnd instances ${\cal I}_1$ and ${\cal I}_1$. We do this by using the optimal \ncnd solution and properties of the different cluster-types (shown above).

\begin{lemma} \label{lem:lowcostmc1}
The optimal cost of \scnd instance ${\cal I}_1$ (in algorithm \mto) is at most the optimal cost $\opt$ of the original \ncnd instance.
\end{lemma}
\begin{proof}
We first exhibit a feasible {\em fractional} flow  using the optimal solution of the multicommodity  \ncnd instance. Let $V^*\sse V$ be the nodes used in \opt. For each request $i\in [k]$, let $P^*_i$ denote the path from $s_i$ to $t_i$ in \opt. Note that $P^*_i\sse V^*$ for all $i\in[k]$, and $\sum_{i\in [k] : P^*_i\ni v} d_i \le q$ for all nodes $v\in V^*$. 

We will use $V^*\cup \{s_T: T\in \tao\}\cup \{v_F: F\in \th\}\cup \{t\}$ as the nodes in our \scnd solution. The cost of this solution is $\opt$ as the new nodes have zero cost. We now show how to route all the demands fractionally using these nodes.

Consider any t1-active cluster $T\in \tao$. We first claim that the total demand of  terminals in $T$ having mates in heavy clusters is at least $\load(T)/4$. Indeed, by definition of t1-active clusters, the total demand from $T$ to internal/heavy clusters is at least $\load(T)/4$. Moreover, by Lemma~\ref{lem:mc-cluster-structure}, there is zero demand from $T$ to internal clusters: so the demand from $T$ to heavy clusters $\th$ is at least $\load(T)/4$.  We now route  demand   from $s_T$ to $t$ as follows. For each external terminal $s_i \in T$ (with its mate $t_i\in F$ for some heavy cluster $F$), send $4$ units of flow from $s_T$ to $s_i$, then from $s_i$ to $t_i$ along path $P^*_i$, then from $t_i$ to $v_F$, and finally from $v_F$ to sink $t$. Note that these are valid paths in graph $G_1$ of instance ${\cal I}_1$. 
Also, the total demand routed from $s_T$ to $t$ is at least $\load(T)=d(s_T)$, as desired.  

Routing as above for each cluster $T\in\tao$, we get a fractional routing that satisfies all the demand in the \scnd instance ${\cal I}_1$.  We now argue that the load on any node is at most its capacity. Clearly, the flow through each node $v\in V^*$ is at most $4\cdot \sum_{i\in [k] : P^*_i\ni v} d_i  \le 4q$. Moreover, the flow through each node $v_F$ (for $F\in \th$) is at most $4$ times the total demand between $F$ and all active clusters, which is at most $40\congsingle q$ by Lemma~\ref{lem:mc-cluster-structure}. 

Let $d_{max} = \max_{T\in\tao} \load(T)$ denote the maximum demand in ${\cal I}_1$; note that $d_{max}\le q$.  Now, we convert the above fractional flow into an unsplittable flow as required in \scnd. To this end,
 we use Theorem~\ref{thm:unsplit} on our fractional routing for ${\cal I}_1$. We then obtain \emph{an unsplittable flow} for ${\cal I}_1$, where (i) the flow through each node of $V^*$ is at most $4q+d_{max}\le 5q=\tilde{q}$, and (ii) the flow through each node $v_F$ is at most $40\congsingle q + d_{max}\le \cop \tilde{q}$. 
 The lemma now follows. \end{proof}

\begin{lemma} \label{lem:lowcostmc2}
The optimal cost of \scnd instance ${\cal I}_2$ (in algorithm \mtt) is at most the optimal cost $\opt$ of the original \ncnd instance.
\end{lemma}
\begin{proof}
The high-level proof is similar to that of Lemma~\ref{lem:lowcostmc1}. We first show a feasible {\em fractional} routing using the \ncnd optimal solution, and then convert it into an unsplittable flow. Let $V^*\sse V$ be the nodes used in \opt. For each request $i\in [k]$, let $P^*_i$ denote the path from $s_i$ to $t_i$ in \opt. Again, $P^*_i\sse V^*$ for all $i\in[k]$, and $\sum_{i\in [k] : P^*_i\ni v} d_i \le q$ for all nodes $v\in V^*$.

We refer to the clusters in ${\cal A}^+$ as source clusters as they correspond to source nodes in the \scnd instance 
${\cal I}_2$. We also refer to the clusters in  ${\cal A}^-$ as sink clusters as they are directly connected to the sink $t$.  
Note that all these clusters are t2-active. 

Consider any source cluster $T\in {\cal A}^+$. We first show that 
the total demand from $T$ to sink clusters ${\cal A}^-$ is at least $\load(T)/8$.
 By definition of t2-active clusters, the total demand from $T$ to other active clusters is at least $\load(T)/4$. Moreover, by Lemma~\ref{lem:mc-cluster-structure}, there are no requests between $T$ and t1-active clusters. So, the total demand from $T$ to other t2-active clusters is at least $\load(T)/4$. Further, by choice of the partition $({\cal A}^+, {\cal A}^-)$ in Step~\ref{step:t2-partn} of \mtt, the total demand from 
any source cluster $T$ to all sink clusters is at least half the total demand from $T$ to t2-active clusters. So, there is demand at least $\load(T)/8$ from $T$ to  ${\cal A}^-$. 

Now, for any source cluster $T$,  let $C_T$ denote the set of all requests between $T$ and sink clusters. We route demand from $s_T$ 
to $t$ as follows. For each $i\in C_T$, send $8$ units from $s_T$ to $s_i$, then from $s_i$ to $t_i$ along path $P^*_i$, then from $t_i$ to $v_W$ (where $t_i$ lies in sink cluster $W\in {\cal A}^-$), and finally from $v_W$ to $t$. Note that these are valid paths in graph $G_2$ of instance ${\cal I}_2$. Moreover, the net flow out of each source $s_T$ is at least $\load(T)$ as desired.

Performing the above routing for all source clusters $T\in{\cal A}^+$, we obtain a fractional flow that satisfies all demands in ${\cal I}_2$. We now argue that the node capacity constraints are satisfied. Clearly, the flow through each node $v\in V^*$ is at most $8\cdot \sum_{i\in [k] : P^*_i\ni v} d_i  \le 8q$. Moreover, the flow through each node $v_W$ is at most $8$ times the total external demand in cluster $W$, which is at most $8\cdot q$ because $W$ is an active cluster.
Now, applying Theorem~\ref{thm:unsplit}, we obtain an unsplittable flow for ${\cal I}_2$, where the flow through each node is at most $8q+\max_{T\in{\cal A}^+} \load(T)\le 9q=q'$. This completes the proof.
\end{proof}

We now summarize some key properties of the (partial) solution built in each iteration of algorithm \mcc.

\begin{lemma}\label{lem:mcc-iteration} In  any iteration of \mcc, we have: 
 \begin{enumerate}
\item  The subtrees added to clusters  are ${\cal N}_1\cup {\cal N}_2$, where ${\cal N}_1$ and ${\cal N}_2$ are found in \mto and \mtt. Each node appears in at most two of these subtrees. 
\item  The  cost of the new nodes added to clusters is at most $2\costsingle\cdot \opt$.
\item The demand of any new internal/heavy cluster 
 is at most $10\congsingle\cdot q$. 
 \item The demand of any existing heavy cluster  increases by at most $54\congsingle^2\cdot q$.
\item  The number of active clusters at the end of the iteration is at most $\frac34$ times  the number at the start of the iteration.
\end{enumerate}
\end{lemma}
\begin{proof}
We prove the claimed properties one by one.

{\em Property 1.} 
In any iteration, subtrees are added to clusters in both \mto and \mtt. The subtrees added in \mto are exactly those in ${\cal N}_1$, which is the collection obtained  by applying Theorem~\ref{thm:ss-disj-cluster} to the \scnd solution $V_1$. Similarly, the  subtrees added in \mtt are exactly those in ${\cal N}_2$. By Theorem~\ref{thm:ss-disj-cluster}, all subtrees in   ${\cal N}_1$ (resp.  ${\cal N}_2$) are node-disjoint. Hence, each node appears at most twice in ${\cal N}_1\cup {\cal N}_2$. 

{\em Property 2.} By  Theorem~\ref{thm:ss-disj-cluster} (property 2), the total cost of subtrees in ${\cal N}_1$ is at most $c(V_1)$. Moreover, by our \scnd approximation guarantee, $c(V_1)$ is at most $\costsingle$ times the optimal value of instance ${\cal I}_1$. Using Lemma~\ref{lem:lowcostmc1}, it now follows that $c({\cal N}_1) \le c(V_1)\le \costsingle\cdot \opt$. Similarly, we obtain $c({\cal N}_2) \le c(V_2)\le \costsingle\cdot \opt$ using Lemma~\ref{lem:lowcostmc2} (for \scnd instance ${\cal I}_2$). Hence, the total cost of the new nodes added to clusters is at most $2\costsingle\cdot \opt$. 

{\em Property 3.} In any iteration, new clusters may be formed in either \mto or \mtt. Consider any new cluster formed in \mtt: see Step~\ref{step:t2-merge}. Note that this cluster corresponds to some subtree $Y\in {\cal N}_2$. So, by property 3 in Theorem~\ref{thm:ss-disj-cluster}, its total demand is at most $C' + d_{max}({\cal I}_2)$ where $C'$ is the maximum flow through any node in our \scnd solution for ${\cal I}_2$. By our \scnd approximation guarantee and the fact that all node capacities are $q'=9q$ (in ${\cal I}_2$), we have $C'\le \congsingle \cdot q'$. Also, the maximum demand $d_{max}({\cal I}_2)\le q$ as each source node corresponds to an active cluster. So, the total demand of the new cluster is at most $(9  \congsingle +1)q$. Now, consider a new cluster $S$ formed in \mto: see Step~\ref{step:t1-merge}. This cluster corresponds to some subtree $X\in {\cal N}_1$ with root $r\not\in N=\{v_F:F\in \th\}$; note that $N$ is the set of neighbors of sink $t$ in instance ${\cal I}_1$. Again, by property 3 in Theorem~\ref{thm:ss-disj-cluster}, the demand of $S$ is at most $C'' + d_{max}({\cal I}_1)$ where $C''$ is the maximum flow through any node in our \scnd solution for ${\cal I}_1$. By our \scnd approximation guarantee and the fact that node capacities in ${\cal I}_1$ are $\tilde{q}=5q$, we have $C''\le \congsingle \cdot \tilde{q}$. (By property 5 of Theorem~\ref{thm:ss-disj-cluster}, cluster $S$ does not contain any node of $N$: so it is not affected by the larger capacity on the $v$-nodes.) Again, the maximum demand $d_{max}({\cal I}_1)\le q$. So, the total demand of the new cluster is at most $(5  \congsingle +1)q$. 

{\em Property 4.} Consider any existing heavy cluster $F\in \th$. The clusters that get added to $F$ correspond to subtrees $X\in {\cal N}_1$ with root $r=v_F$; see Step~\ref{step:heavy-merge} in \mto. Moreover, node $v_F$ has capacity $9\congsingle$ (which is equivalent to having $9\congsingle$ {\em copies} of $v_F$). So, there may be up to   $9\congsingle$ subtrees $X\in {\cal N}_1$ with root $v_F$. Each such subtree has demand at most $(5  \congsingle +1)q$, as shown above. Hence, the increase in demand of $F$ is at most $9\congsingle(5  \congsingle +1)q\le 54\congsingle^2 q$.  

{\em Property 5.} Let $m_1$  (resp. $m_2$)   be the number of t1-active (resp. t2-active) clusters at the start of the iteration.  Let $\tao'$ (resp. $\tat'$)   be the  t1-active (resp. t2-active) clusters at the end of algorithm \mto (resp. \mtt). 
Any cluster that is active at the end of the iteration must be in $\tao'\cup \tat'$. (Note that some clusters in  $\tao'\cup \tat'$ may become internal/heavy during the pruning step in \mcc.) We now bound $|\tao'|$ and $|\tat'|$ separately.  
\begin{itemize}
\item The clusters in $\tao'$ are based on the \scnd instance ${\cal I}_1$ (in \mto), which  has a source for every cluster in \tao. In particular, each cluster in $\tao'$ corresponds to some subtree $X\in {\cal N}_1$ with root $r\not\in N=\{v_F:F\in \th\}$. See Step~\ref{step:t1-merge} in \mto.  As $r$ is not a neighbor of sink $t$, property 4 in Theorem~\ref{thm:ss-disj-cluster} implies that subtree $X$ contains at least two sources. It follows that $|\tao'|\le \frac12 |\tao|$.

\item The clusters in $\tat'$ are based on the \scnd instance ${\cal I}_2$ (in \mtt). Recall that we use a bi-partition $({\cal A}^+, {\cal A}^-)$ of $\tat$. Instance ${\cal I}_2$  has a source for each cluster in ${\cal A}^+$ and the neighbors $M$ of the sink correspond to clusters in ${\cal A}^-$. Let $B\sse M$ denote the neighbors of the sink that appear in some subtree of ${\cal N}_2$; let ${\cal B}\sse {\cal A}^-$ denote the corresponding clusters. The collection $\tat'$ consists of (i) clusters from ${\cal A}^-\setminus {\cal B}$, and (ii) clusters corresponding to subtrees in ${\cal N}_2$. By property 4 in Theorem~\ref{thm:ss-disj-cluster}, every subtree in ${\cal N}_2$  has at least two clusters from ${\cal A}^+ \cup {\cal B}$. So, the number of clusters in case (ii) above is at most $\frac12\cdot \left(|{\cal A}^+| + |{\cal B}|\right)$. Clearly, the number of clusters in case (i) is $|{\cal A}^-| - |{\cal B}|$. So,
$$|\tat'| \le \frac{|{\cal A}^+| + |{\cal B}|}{2} + |{\cal A}^-| - |{\cal B}| = \frac{|\tat|}{2} + \frac{|{\cal A}^-| - |{\cal B}|}{2}\le \frac34\cdot |\tat|,$$
where we used that $|{\cal A}^-|\le |\tat|/2$ as $|{\cal A}^-|\le |{\cal A}^+|$.
\end{itemize}
Thus, $|\tao'| +|\tat'|\le \frac12\cdot |\tao| + \frac34\cdot |\tat| \le \frac34\cdot (m_1+m_2)$.
\end{proof}


Finally, we show that 
the demand preserved in set $K$ at the end of algorithm \mcc is a constant fraction of the total demand in \ncnd. 
\begin{lemma}\label{lem:delete}
The total demand  at the end of  \mcc is $\sum_{i\in K} d_i \ge D/4$. 
\end{lemma}
\begin{proof}
Note that requests are only removed in Step~\ref{step:cluster-remove-req} of \mcc. 
We will account for the deleted requests by explicitly ``charging'' them to requests that will be preserved in $K$ (at the end of \mcc). 

Consider any cluster $T$ that is processed in  Step~\ref{step:cluster-remove-req} at any iteration of \mcc. Note that $T$ must be in the active set, i.e. $T\in \tao\cup\tat$. Let $t_i$ be the total demand of $T$'s internal terminals; note that the demand of $T$'s internal requests is $t_i/2$ as each such request has two terminals in $T$. Let  $t_h$ (resp. $t_a$) be the total demand of $T$'s external terminals with mates in \th (resp. $\tao\cup\tat$). By Lemma~\ref{lem:mc-cluster-structure}, there are no requests (in the current set $K$) from $T$ to any cluster of \ti. So, $\load(T)=t_i+t_h+t_a$. In this step, we remove (from $K$) all requests from $T$ to other active clusters. So, the deleted demand is exactly $t_a$. Furthermore, we claim that the following requests incident to $T$ will be preserved (in $K$) until the end.
\begin{itemize}
\item {\em Internal requests  in $T$.} This is because we never remove any internal request. Also, clusters only merge with each other during the algorithm: so any internal request in $T$ will remain internal to some cluster. 
 \item {\em External requests from $T$ to \th.} This is because we never remove any request incident to a heavy cluster. Also, any cluster that is currently heavy will remain heavy for the rest of the algorithm.
\end{itemize}
The total demand in these ``preserved'' requests is $\frac{t_i}2+t_h$. We now show that the removed demand $t_a\le 3\cdot (\frac{t_i}2+t_h)$. Consider the two cases when requests are removed:
\begin{itemize}
\item {\em $T$ is an internal cluster.} Then, we have $t_i\ge \frac{\load(T)}{2} = \frac{t_i+t_h+t_a}2$; so $t_a\le t_i$. 
\item {\em $T$ is a dangerous cluster.} Here, $T$ is t1-active, which means it has at least $\frac{\load(T)}{4}$ demand going to heavy clusters (recall that $T$ has no requests to internal clusters).  Then, we have $t_h\ge \frac{t_i+t_h+t_a}4$, which implies $t_a\le 3t_h$.
\end{itemize}
Thus, the total removed demand incident to $T$ is at most $3$ times the total preserved demand incident to $T$. 

Finally, we show that the preserved requests incident to different clusters $T$ that are processed in Step~\ref{step:cluster-remove-req} (\mcc) are {\em disjoint}.  To this end, we will show that once a cluster $T$ is processed in Step~\ref{step:cluster-remove-req}, it will not be part of any cluster $T'$ that is processed  in Step~\ref{step:cluster-remove-req} (and removes requests) at a later iteration. Indeed, after cluster $T$ is processed in Step~\ref{step:cluster-remove-req}, there are no requests from $T$ to other active clusters. If cluster $T$ becomes internal  then it remains unchanged  in later iterations. If cluster $T$ becomes t1-active then it may merge with other clusters: but these can only be heavy clusters (in which case $T$ also becomes part of that heavy cluster) or 
t1-active clusters (which by Lemma~\ref{lem:mc-cluster-structure} also have no external requests to active clusters). In either case, cluster $T$ will never be part of another cluster that gets processed in Step~\ref{step:cluster-remove-req} again. It now follows that the total removed demand (over all iterations) is at most $3$ times the total preserved demand, which completes the proof.
\end{proof}

\def\cf{\ensuremath{{\cal \widehat T}}\xspace}

\medskip \noindent {\bf Completing Proof of Theorem~\ref{thm:clustermc}.} We are now ready to prove the multicommodity clustering theorem.  Let ${\cal \widehat T}$ denote the final 
collection of  clusters  and $K\sse [k]$ the final set of requests, at the end of algorithm \mcc. 

{\em Property (i). Each cluster in \cf is internal or heavy.} By the termination condition, there is no active cluster remaining at the end, i.e., $\tao\cup\tat=\emptyset$. So, $\cf=\th\cup\ti$, which means all clusters are internal/heavy (see Lemma~\ref{lem:mc-clust-class}).

{\em Property (ii). Each terminal (i.e. node in $\{s_i,t_i\}_{i\in K}$) lies in exactly one cluster.} Initially, each terminal is in its own cluster: so this property is true. In algorithm \mcc, we  only merge clusters together by adding some subtrees (based on \scnd instances). Note that these subtrees do not contain any terminal because each terminal is a leaf-node, and the sources/sink in our \scnd instances are new nodes (different from the original terminals). So, each  terminal lies in exactly one cluster throughout  the algorithm.

{\em Property (iii). The total demand of request-pairs $K$ is $\sum_{i\in K} d_i\ge D/4$.} This follows directly from Lemma~\ref{lem:delete}.

{\em Property (iv). Each node appears in  at most $O(\log  k)$ different clusters.} We first claim that the number of iterations in \mcc is $O(\log k)$. Indeed, by Lemma~\ref{lem:mcc-iteration}(5), the number of active clusters drops by a  factor of $4/3$ in each iteration. As there are $2k$ active clusters initially, we will have no active clusters left after $O(\log k)$ iterations. 
By Lemma~\ref{lem:mcc-iteration}(1), in each iteration, each node gets added to at most two clusters. Combined with the number of iterations, property (iv)  follows. 

{\em Property (v). The demand of each cluster is at most $O(\congsingle^2 \log k)\cdot q$.} The demand of any internal/heavy cluster when it is formed is $O(\congsingle)\cdot q$ by Lemma~\ref{lem:mcc-iteration}(3). Note that internal clusters do not change after they are formed (and added to \ti). For any heavy cluster, by Lemma~\ref{lem:mcc-iteration}(4),  the increase in demand is $O(\congsingle^2)\cdot q$ in each iteration. As the number of iterations is $O(\log k)$, the final demand of any heavy cluster is $O(\congsingle^2 \log k)\cdot q$.

{\em Property (vi). The total cost  $\sum_{T\in {\cal \widehat T}}  \sum_{v\in T} c_v   \le   O(\costsingle \cdot \log k) \cdot \opt$.} The cost of nodes added in any iteration is $O(\costsingle)\cdot  \opt$ by Lemma~\ref{lem:mcc-iteration}(2). Combined with the $O(\log k)$ number of iterations, we obtain property (vi).  

This completes the proof of Theorem~\ref{thm:clustermc}.

\subsection{Routing Across Clusters \label{subsec:mc-routing}}
From Theorem~\ref{thm:clustermc}, we have a collection $\widehat{{\cal T}}$ of low-cost, nearly node-disjoint clusters  containing requests $K\sse [k]$. Here, we show how to route a constant fraction of the requests in $K$. Routing within a cluster can be done easily using the corresponding subtree. So we focus on routing each request  across clusters,  from their ``source cluster'' to their ``sink cluster''. In order to achieve this, we will add some nodes/edges to our solution. 

\def\cH{\ensuremath{{\cal H}}\xspace}
\def\cM{\ensuremath{{\cal M}}\xspace}
\def\cE{\ensuremath{{\cal E}}\xspace}

\paragraph{Algorithm Overview} Recall that $\widehat{{\cal T}}$ contains two types of clusters: internal and heavy. Let $\ti$ denote all internal clusters in $\widehat \T$, and $\th=\widehat{{\cal T}}\setminus \ti$ denote the heavy clusters. If a cluster is both internal and heavy then it is treated as an internal cluster.  
If $\ti$ contains a constant fraction of the demand in $K$, then we don't need to route across clusters: simply use the subtrees in $\ti$ to route all these internal requests. The hard   case is  when  most of the demand in $K$ is contained in $\th$. In this case, we use an idea from \cite{AntoniadisIKMNPS14} for the {\em edge} capacitated routing problem. Specifically, each request $i$ ``hallucinates'' that its demand equals $q$ with probability $\approx \log(k) \frac{d_i}{q}$ (and zero otherwise), and we find a subgraph \cH that supports all the hallucinated demands. We can find a good approximation to this ``hallucinated instance'' by rounding a natural linear-program (see \S\ref{mc-hallucination}). 
We then use the union of $\cH$ and $\widehat \T$ as our solution. However (unlike the edge-capacitated case), this solution may not suffice to route {\em all} the remaining demands. Nevertheless, we show that a constant fraction of the remaining demand can still be routed in $\cH\cup \widehat{\T}$. To this end,  
we partition the heavy clusters into ${\cal T}_1, {\cal T}_2, \cdots, \T_p$ such that the minimum (edge) cut of the demand graph induced on each ${\cal T}_{j}$ is at least $\Omega(q)$. This partitioning algorithm is based on iteratively removing {\em minimal} min-cuts (see \S\ref{mc-routable}). We also delete all requests in $K$ crossing from one part to another, and prove that the remaining demand is still a constant fraction of that in $K$.    Then, we show that when min-cuts are large, the hallucinated request-pairs behave like a \emph{cut-sparsifier}~\cite{karger1999random} of the demand graph (after contracting the clusters). So the hallucinated solution \cH has enough capacity to support an {\em edge-capacitated} routing across clusters. Finally, we ``un-contract'' these clusters by using the trees in $\th$ to route within each cluster (see \S\ref{mc-routing-halln}). We bound the {\em node congestion} of the routing using the fact that each cluster has load $O(\congsingle^2 \log k) \cdot q$.   The formal algorithm is given as Algorithm~\ref{alg:routing-mc}: it takes as input the clusters $\widehat{\T}$ (and requests $K$) from Theorem~\ref{thm:clustermc}, and outputs a subgraph $\cE$ of $G$ along with a subset $K'\sse K$ of requests that can be routed in \cE at low node congestion.

\begin{algorithm}
\caption {\ncnd Routing Algorithm \label{alg:routing-mc}}
\begin{algorithmic}[1]
\State \label{step:route-easy} if the internal requests of clusters in $\ti$ have demand at least $\frac16 \sum_{i\in K}d_i$,  return $\cE=\ti$ as the solution and all  internal requests in $\ti$ as the routable pairs $K'$.

 \State \label{step:route-mincut1} apply Theorem~\ref{thm:mc-mincut} to the heavy clusters $\th$ and obtain partition  ${\cal T}_1, {\cal T}_2, \cdots, \T_p$. 
\State \label{step:route-mincut2} let $K'\sse K$ denote the union of requests induced in $\T_j$ (for $j=1,2,\cdots p$).

\State  let $r=\frac{\alpha_1 \log k}q$ where $\alpha_1>1$ is some constant.
\State \label{step:route-sample} let $\cM$ denote the random instance with each request $i\in K$ having demand $B_i\cdot q$, where $B_i\sim \mathsf{Binomial}(d_i, r)$ independently. 
\State \label{step:route-halln} apply Theorem~\ref{thm:mc-unit} to obtain solution $\cH\sse V$ for the  hallucinated  instance  \cM.
\State \label{step:route-end} return $\cE=\cH\cup \th$ as the solution and $K'$ as the routable pairs.
\end{algorithmic}
\end{algorithm}

The rest of this subsection proves the following main result. 
\begin{theorem}\label{thm:mc-routing} Given any \ncnd instance with optimal cost \opt, after running Algorithms \ref{alg:cluster-mc} and \ref{alg:routing-mc}, we have the following with probability at least $1-O(\frac1{k^2})$:
\begin{itemize}
\item   The cost of solution \cE is $O(\costsingle \log k)\cdot \opt$.
\item Solution \cE  supports an unsplittable routing for all requests in $K'$ with node congestion $O(\congsingle^2\log^3k)$. 
\item The total demand of requests in $K'$ is at least $D/{24}$.
\end{itemize}
\end{theorem}

Throughout, we assume that $r=\frac{\alpha_1 \log k}q<1$: so $r$ is a valid probability value in step~\ref{step:route-sample}. If $r\ge 1$ then we have $q=O(\log k)$, in which case there is an easy $(O(\log n \log k),  O(\log n \log k))$ bicriteria approximation algorithm; see Appendix~\ref{app:small-q}.

\subsubsection{Identifying Routable Request-Pairs in Heavy Clusters\label{mc-routable}}
\def\K{\ensuremath{\widetilde{K}}\xspace}

We now show how to identify 
a subset $K'\sse K$ of request-pairs by partitioning the demand graph into components of high min-cut values. 
 Having a high min-cut in the demand graph  is necessary for the cut-sparsification  argument that is used (in \S\ref{mc-routing-halln})  to route requests  $K'$ with low congestion.

We first define a cluster multigraph as follows. 
\begin{definition} [Cluster Multigraph $\cC({\cal T})$] \label{def:cg}
Given a  collection ${\cal T}$ of clusters, the multigraph $\cC({\cal T})$ has a node for every cluster $T \in {\cal T}$, and for each request-pair $i\in K$,  an edge of weight $d_i$ between clusters $T_a, T_b\in {\cal T}$ where $s_i\in T_a$ and $t_i\in T_b$.
\end{definition}
 We now show how to partition $\cC(\T)$ into several {\em parts} so that the minimum (edge)  cut of each part is  $\Omega(q)$ and the demand of ``crossing'' requests is small.
 For any graph $J$ and subset $X$ of nodes, we use the  notation $\partial_J(X)$ to denote the set of edges in $J$ with exactly one end-point in $X$.
\begin{theorem}\label{thm:mc-mincut}
There is a polynomial-time algorithm that, 
given any  collection ${\cal T}$ of clusters,  computes a partition  ${\cal T}_1, {\cal T}_2, \cdots, \T_p$ of $\T$  such that:
\begin{enumerate}
\item[i.] For each $j\in[p]$, the induced cluster graph $\cC(\T_j)$ has min-cut at least $ q/8$.  
\item[ii.] The total weight of edges in $K'=\bigcup_{j=1}^p \cC(\T_j)$ is at least $W - \frac{N q}{4}$.  
\end{enumerate} 
Here, $N=|\T|$  and $W$ is the total weight of edges in $\cC(\T)$. 
\end{theorem}
\begin{proof}
Consider the following procedure to obtain the partition. Initially, $\T'=\T$  and  $K'$ consists of all edges in $\cC(\T)$. 
For $j=1,2,\cdots $ do:
 \begin{enumerate}
\item If the min-cut value in $\cC(\T')$ is at least $\frac{q}{4}$ then $\T_j\gets \T'$ and stop.
\item Let $S\sse \T'$ denote a {\em minimal} min-cut in graph $\cC(\T')$.
\item Set $\T_j\gets S$, $\T'\gets \T'\setminus S$ and $K'\gets K' \setminus \partial_{\cC(\T')}(S)$.
\end{enumerate}
Note that this procedure creates one part in each iteration. At any point, $\T'$ denotes the remaining set of clusters/nodes (that are still unassigned to parts) and $K'$ denotes the current set of non-crossing edges. (An edge is said to be crossing if its end points lie in different parts.) Let  $G'=\cC(\T')$ denote the current graph. 
Let $p$ denote the number of iterations, which is also the number of parts in the partition of $\T$.  At the end of this procedure, $K'$ equals the set of edges in $\bigcup_{j=1}^p \cC(\T_j)$, which are all the non-crossing edges.

We first prove condition (ii). 
Clearly, the number of iterations is at most $N$, the number of nodes in $\cC$. As $S$ is a min-cut, the total weight of the edges $\partial_{G'}(S)$ removed in any iteration, is at most $ q/4$. So  the total weight of all edges removed is at most $(  Nq)/4$.  So, the total weight of $K'$ (at the end) is at least $W-(  Nq)/4$.

We now prove condition (i).  Note that each part $\T_j$ is either (i) the set $S$ in some iteration above, or (ii) the final set $\T'$. Clearly, in the latter case, graph $\cC(\T_j)$ has min-cut at least $ q /4 \ge \frac{  q}{8}$. In the former case, consider the graph $G'=\cC(\T')$ and set $S\sse \T'$ in the iteration when $\T_j=S$ was created. If $|S|=1$ then there is nothing to prove as part $\T_j=S$ would have infinite min-cut value. Let $A\subset  S$ be any strict subset.  By minimality of $S$, the weights of $\partial_{G'}(A)$   and $\partial_{G'}(S\setminus A)$   are both at least $\frac{  q}{4}$.  
 Let $a$ (resp. $b$) denote the total weight of edges having one end-point in $A$ (resp. $S\setminus A$) and the other end-point in $\T'\setminus S$.  Also, let $x$ denote the total weight of edges having one end-point in $A$ and the other in $S\setminus A$. Note that the weight of $\partial_{G'}(A)$ is $x+a$, weight of $\partial_{G'}(S\setminus A)$ is $x+b$ and weight of $\partial_{G'}(S)$ is $a+b$. 
Combined with the observation above (by minimality of cut $S$),
$$x+a\ge \frac{  q}{4}, \quad x+b\ge \frac{  q}{4}, \quad a+b< \frac{  q}{4}.$$
It follows that $x \ge \frac{  q}{8}$, i.e. the weight of edges between $A$ and $S\setminus A$ is at least $\frac{  q}{8}$. As this holds for all strict subsets $A\subset S$, the min-cut of $G'[S]=\cC(\T_j)$ is at least $\frac{  q}{8}$.
\end{proof}

 In our algorithm, we apply this result to the collection of heavy clusters \th. 
Next, we show how to add some nodes $\cH$ to the heavy clusters (see \S\ref{mc-hallucination}) so that  all requests in $K'$ can be routed in the resulting solution $\cE=\th\cup \cH$ with low node congestion (see \S\ref{mc-routing-halln}).

\subsubsection{Hallucinating to Connect Heavy Clusters}
 \label{mc-hallucination}
Here, we show how to find a low-cost solution \cH for the  ``hallucinated instance'' \cM.  Recall that instance \cM has demand $B_i\cdot q$ for each request-pair $i\in K$, where $B_i\sim \mathsf{Binomial}(d_i, r)$ independently. Note that the expected demand of request $i$ is $\E[B_i]\cdot q=rd_iq=O(\log k)\cdot d_i$. We treat the $B_i\cdot q$ demand of each request $i$ as $B_i$ many copies (each with demand $q$); so these demands can be sent along $B_i$ different $s_i-t_i$ paths (each carrying $q$ units). Note that  all nodes in graph $G$ have capacity $q$. By scaling down all capacities/demands by $q$, we obtain an  equivalent instance with unit node-capacities and $B_i$ many (unit demand) requests between $s_i$ and $t_i$,  for all $i\in K$. For simplicity, we work with this scaled instance as \cM. 
We first show that (with high probability), there exists a solution to ${\cal M}$ of low cost and bounded node congestion, and then provide an algorithm to find such a solution.
\begin{lemma}\label{lem:mc-LP}
With probability at least $1-O(\frac{1}{n^2})$, there is an unsplittable routing $\{P^*_i\}_{i\in \cM}$ of the hallucinated requests where $\sum_{i\in \cM} \sum_{v\in P^*_i} c_v\le (\alpha_2 \log n)\cdot \opt$ and the node-congestion is at most $\alpha_2\cdot \log n$, where $\alpha_2=O(1)$.
\end{lemma}
\begin{proof}
Consider the optimal solution for the original \ncnd instance. Let $P^*_i$ denote the path used for sending $d_i$ units of flow between $s_i$ and $t_i$, for each request $i\in [k]$. We now consider the solution $X$ that sends $B_i$ units  of flow on the optimal path $P^*_i$ for each request $i\in K$. (Equivalently, each of the $B_i$ copies of request $i$ uses the same path $P^*_i$ to route its unit flow.) We  now show that this solution has $O(\log n)$ congestion with high probability.
 To see this, consider any node $v\in V$. By feasibility of the solution $\{P^*_i:i\in [k]\}$, we have $\sum_{i\in [k] : v \in P^*_i} d_i \le q$. The load on node $v$ in solution $X$ is $L_v:=\sum_{i\in K : v \in P^*_i} B_i$. As each $B_i$ is a Binomial random variable,  $L_v$ is the sum of independent $[0,1]$ random variables. The mean 
$$\E[L_v] = \sum_{i\in K : v \in P^*_i} \E[B_i] = \frac{\alpha_1  \log k}{q}\sum_{i\in K : v \in P^*_i} d_i \le \alpha_1 \log k\le \alpha_1 \log n.$$ By a Chernoff bound, there is a constant $\alpha_2$ such that $\Pr[L_v > \alpha_2 \cdot \log n] \leq \frac{1}{n^3}$. Taking a union bound over all $n$ nodes, we have $\Pr\left[\exists v \, : \, L_v > \alpha_2\cdot \log n\right] \leq \frac{1}{n^2}$. We condition on the event that $L_v\le \alpha_2 \cdot \log n$ for all $v\in V$. Then, the node-congestion of solution $X$ is as claimed. We now bound the cost: 
$$\sum_{i\in \cM} \sum_{v\in P^*_i} c_v = \sum_{v\in V} c_v \cdot L_v = \sum_{v\in \opt} c_v \cdot L_v \le (\alpha_2 \log n)\cdot \opt.$$
Hence solution $X$ satisfies both properties in the lemma. 
\end{proof}
 \begin{theorem}\label{thm:mc-unit}
There is a polynomial algorithm that finds a solution \cH for the hallucinated instance \cM satisfying the following with probability at least $1-O(\frac{1}{n^2})$:
\begin{itemize}
\item The total cost of nodes in $\cH$ is $O(\log n)\cdot \opt$.
\item The node congestion of \cH is $O(\log n)$.
\end{itemize}
\end{theorem}
\begin{proof}
 We use the following linear program relaxation:
\begin{alignat}{2}
&\min \sum_{i\in \cM} \sum_{p\in \cP_i} (\sum_{v\in p} c_v)\cdot f(p)  \tag{$LP_h$} \label{lp:cap} &\\
& \mbox{s.t.}  \; \sum_{p \in \cP_i} f(p) = 1  &\forall i\in {\cal M}  \label{f:constraintone}\\
&  \sum_{p | v \in p} f(p) \leq \alpha_2\cdot \log n\;\;\;\;\;  &\forall v \in V  \label{f:constrainttwo}\\
& f(p) \geq 0  &\forall i \in {\cal M}, \,\, \forall p \in \cP_i.
\end{alignat}
Here, $\cP_i$ is the set of all $s_i$-$t_i$ flow paths in $G$. Constraint~\eqref{f:constraintone} requires that exactly one path be selected for each request $i\in \cM$ (recall that \cM contains $B_i$ copies of each request $i\in K$). Constraint~\eqref{f:constrainttwo} bounds the node-congestion by $O(\log n)$. Note that the LP objective corresponds to the sum of costs over all paths in the solution (so each node gets counted multiple times). By Lemma~\ref{lem:mc-LP}, the optimal value of \eqref{lp:cap} is at most $(\alpha_2 \log n)\cdot  \opt$ with probability at least $1-O(\frac1{n^2})$. Although  this LP has an exponential number of variables, it can be re-formulated using a polynomial number of (flow-based) variables. Hence, we can solve this LP exactly in polynomial time. 

We now perform simple randomized rounding of the LP solution. For each request $i\in \cM$, select a random path $U_i\in \cP_i$  with probability $\{f(p)\}_{p\in \cP_i}$ independently. The solution $\cH = \cup_{i\in \cM} U_i$. We now prove that \cH satisfies the claimed properties with probability at least $\frac12$. 

For any path $p$, let $c(p)=\sum_{v\in p} c_v$ denote its total node cost. The cost of \cH is at most $C(\cH):= \sum_{i\in \cM} c(U_i)$; note that 
$\E[C(\cH)]$ equals the LP optimal value, which is at most $(\alpha_2 \log n)\cdot  \opt$. So, by Markov's inequality, with probability at least $\frac23$, the cost of \cH is at most $(3\alpha_2 \log n)\cdot  \opt$. 

We now bound the node congestion. For any node $v\in V$, let $L_v$ denote the number of paths in $\{U_i\}_{i\in\cM}$ containing $v$. Note that $L_v$ is the sum of independent $0/1$ random variables, with mean $\E[L_v] \le \alpha_2  \log n$ by \eqref{f:constrainttwo}. By a Chernoff bound, there is a constant $\alpha_3$ such that $\Pr[L_v > \alpha_3 \cdot \log n] \leq \frac{1}{n^3}$. Taking a union bound over all nodes $v$, we have $\Pr\left[\exists v \, : \, L_v > \alpha_3\cdot \log n\right] \leq \frac{1}{n^2}$. 

Hence, with probability at least $\frac12$, we obtain both the claimed properties of \cH. We can boost the success probability by repeating this algorithm independently $O(\log n)$ times and returning the best solution found as \cH. This proves that \cH satisfies the claimed properties with probability at least $1-O(\frac{1}{n^2})$.
\end{proof}

\def\G{\ensuremath{\widetilde{\cC}(j)}\xspace}
\def\H{\ensuremath{\widetilde{H_C}(j)}\xspace}

\subsubsection{Routing flow in hallucinated graph} \label{mc-routing-halln}
Here, we show that all requests in $K'$ can be routed in our solution $\cE=\th\cup \cH$ with low congestion. Recall that $K'$ is the set of ``routable requests'' identified in \S\ref{mc-routable} and $\cH$ is the hallucinated solution found in \S\ref{mc-hallucination}. 
Also, recall the partition ${\cal T}_1,\cdots ,{\cal T}_p$ of   heavy clusters obtained by applying Theorem~\ref{thm:mc-mincut}. 
 For each part ${\cal T}_j$,   define a
random edge-capacitated graph  $H_C(j)$ as follows. Nodes of $H_C(j)$ correspond to clusters of ${\cal T}_j$. For each request  $i$ with both $s_i$ and $t_i$ in clusters of $\T_j$,  there are $B_i$ \emph{edges of capacity $q$} in $H_C(j)$ between the clusters containing $s_i$ and $t_i$;  these edges correspond to the $B_i$ many $s_i-t_i$ paths found in the hallucinated instance \cM (see Theorem~\ref{thm:mc-unit}). 

Henceforth, we shall slightly abuse notation and refer to the cluster graph $\cC({\cal T}_{j})$ as just $\cC(j)$. Recall that each edge in $\cC(j)$ corresponds to some request $i\in K$ with both $s_i$ and $t_i$ in $\T_j$, and has weight $d_i$ (the demand of request $i$).  

We  will make use of the following cut-sparsification result.
\begin{theorem}[Theorem~2.1 \cite{karger1999random}] \label{thm:karger} 
Let $G$ be an $N$-node  multigraph with min-cut $\kappa$, and $r\in[0,1]$. Let $H$ be a multigraph containing each edge of $G$ independently with probability $r$. If $r\cdot \kappa\ge \frac{3(d+2)\ln N}{\epsilon^2}$ for some $d,\epsilon$, then with probability $1-O(1/N^{d})$, every cut in $H$ has value within $r(1\pm \epsilon)$ of the cut value in $G$.
\end{theorem} 

 \begin{lemma}\label{lem:Karger}
For any $j\in [p]$, with probability at least $1-O(\frac{1}{k^3})$, all request-pairs in $\cC(j)$ can be routed fractionally in $H_C(j)$ without exceeding edge capacities.
\end{lemma}
\begin{proof}
By Theorem~\ref{thm:mc-mincut}(i), the minimum cut in $\cC(j)$ has value $\kappa\ge \frac{  q}{8}$. Let $\G$ be an {\em unweighted} multigraph obtained from $\cC(j)$ by replacing each edge $i$ in $\cC(j)$ with 
$d_i$ parallel edges ($d_i$ is the demand of request $i$). Note that for any subset $S\sse \T_j$, its cut values in $\cC(j)$ and \G are the same. 
So the 
min-cut in \G is also $\kappa\ge \frac{  q}{8}$. Note that $H_C(j)$ can be viewed equivalently as a random subgraph of \G obtained by selecting each edge independently with probability $r=\frac{\alpha_1 \ln k}{q}$, and assigning capacity $q$ to each selected edge.

 We now apply Theorem~\ref{thm:karger} on graph \G with $r$ and $\kappa$ as above. Note, $r\cdot \kappa\ge \frac{\alpha_1  }{8}\ln k\ge 60\ln k$, 
assuming that $\alpha_1\ge  480$. So, we can set $d=3$ and $\epsilon=\frac12$ in this theorem, which implies that with probability at least $1-O(\frac{1}{k^3})$, 
for every subset $S\sse \T_j$, its cut value in $H_C(j)$ is at least $\frac{\alpha_1 \ln k}{2}$ times its cut value in $\cC(j)$.

Note that each edge in $H_C(j)$ has capacity $q$: so the cut value of $S$ in $H_C(j)$ is $cut_H(S):=q\cdot |\partial_{H_C(j)}(S)|$. Let $cut_G(S):=  \sum_{i\in \partial_{\cC(j)}(S)} d_i$ denote the cut value of $S$ in $\cC(j)$. 
In other words, the {\em non-uniform sparsest cut} of the  multicommodity routing instance with demand-graph $\cC(j)$ and capacity-graph $H_C(j)$ is:
$$\min_{ S\sse \T_j} \frac{\mbox{capacity across }S}{\mbox{demand across }S}  =  \min_{ S\sse \T_j } \frac{cut_H(S)}{cut_G(S)} \ge  \frac{\alpha_1 \ln k}{2}.$$
 Choosing $\alpha_1$ to be a large enough constant, we can ensure that the above sparsest cut value is  at least the multicommodity flow-cut gap $\Theta(\log k)$~\cite{Linial-London-Rabinovich}. This proves the existence of a fractional routing for request-pairs in $\cC(j)$.
\end{proof}
This lemma enables us  to find an {\em edge-capacitated} multicommodity flow for the request pairs $K'$ in $\bigcup_{j=1}^p \cC(j)$. For each request $i$  in $\cC(j)$, let $f_i$ denote the fractional flow sending $d_i$ units from the source to sink cluster  in graph $H_C(j)$. Note that the flows $\{f_i\}_{i\in K'}$ can be routed concurrently, while respecting all edge capacities: so the total flow through each edge of $\bigcup_{j=1}^p H_C(j)$ is at most $q$. Further, the next lemma shows that the total flow  through any node  is also bounded.

\begin{lemma}\label{cl:mc-node-ub}
The total capacity of edges in $\bigcup_{j=1}^p H_C(j)$  incident to any node is $O(\congsingle^2 \log^2 k)\cdot q$, with  probability at least $1-O(\frac{1}{k^3})$.
\end{lemma}
\begin{proof} 
 Consider any node (cluster) $T\in H_C(j)$ for any $j\in[p]$. From Theorem~\ref{thm:clustermc}(iv), we know that  the total demand of requests $R_T$ incident to $T$ (i.e. having a terminal in $T$) is $O(\congsingle^2 \log k)q$. Every edge in $H_C(j)$ incident to $T$ corresponds 
 some  request $i\in R_T$, and the edge has capacity  $B_i\cdot q$, where $B_i\sim \mathsf{Bernoulli}(d_i,r)$. So the total capacity incident to node $T$ is $X:=\sum_{i\in R_T} B_i\cdot q$. Note that  $\E[X] =qr\sum_{i\in R_T} d_i=\alpha_1\log k\sum_{i\in R_T} d_i = O(\congsingle^2 \log^2 k)\cdot q$. As $X$ is the sum of independent $\{0,q\}$ random variables, we obtain by a Chernoff bound, that 
$X= O(\congsingle^2\log^2 k)\cdot q$ with  probability at least $1-\frac{1}{k^4}$. Finally, a  union bound over all nodes/clusters completes the proof.
\end{proof}

\smallskip
\noindent {\bf Obtaining node-capacitated routing in $G$.}  Now, we show that the edge capacitated  routing $\F=\{f_i:i\in K'\}$ in the hallucinated graph $H_C:=\bigcup_{j=1}^p H_C(j)$ can also be implemented as a node-capacitated routing in the real graph $G$. This involves un-contracting nodes of $H_C$ into clusters $\th$ and the edges of $H_C$ into flow-paths of the hallucinated solution \cH. See Figure~\ref{fig:overall-route} for an example.

\begin{figure}[h]
\centering
\includegraphics[scale=0.68 ]{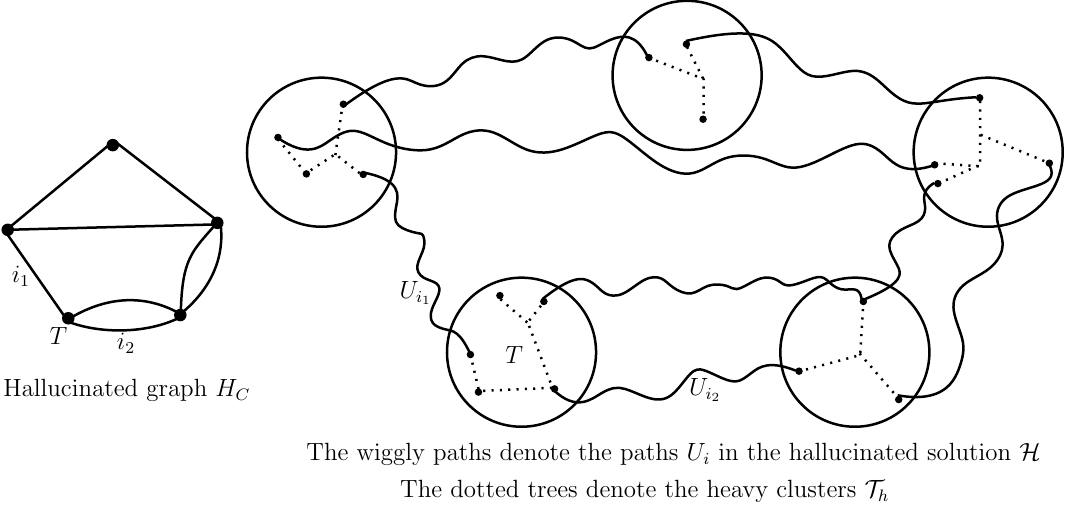}
\caption{ Un-contracting hallucinated graph $H_C$ using clusters $\th$ and flow \cH.\label{fig:overall-route}}
\end{figure}

\begin{lemma}\label{lem:mc-route-cong} With probability at least $1-O(\frac1{k^2})$,   solution $\cE= \cH\cup \th$ in step~\ref{step:route-end} of Algorithm~\ref{alg:routing-mc} supports an unsplittable routing of  requests $K'$ with node-congestion $O(\congsingle^2\log^3 k)\cdot q$. 
\end{lemma}
\begin{proof}  
We start with the fractional routing $\F$ in the hallucinated graph $H_C$, which corresponds to routing flow across clusters. By Lemma~\ref{lem:Karger}, the total flow on each edge of $H_C$ is at most $q$, with probability at least $1-O(\frac{1}{k^3})$. We replace the flow in $\F$ on each 
edge $i$ of $H_C$ 
with the flow-path $U_i$ used for request $i$ in the hallucinated solution \cH (recall that $i$ corresponds to some request in \cM). By Theorem~\ref{thm:mc-unit}, each node in graph $G$ appears in $O(\log n)$ many flow-paths $\{U_i\}$. 
 So, this results in a flow of $O(\log n)\cdot q$ through each node of graph $G$. However, we do not yet have a valid routing as we still need to route flow within each cluster (i.e., node of $H_C$).  

In order to route flow in $\F$ through any node/cluster $T\in H_C$, we do the following.  Consider any pair of consecutive edges $i_1,i_2$ (both incident to cluster $T$) used in the routing $\F$. Although 
the end-points of paths $U_{i_1}$ and $U_{i_2}$ may be different, they must both be terminals in cluster $T$: so we can use the subtree corresponding to $T$ to route the flow through this cluster. See Figure~\ref{fig:overall-route}. 
 By Lemma~\ref{cl:mc-node-ub}, the total flow in $\F$ through any node of $H_C$ is $O(\log^2 k) \congsingle^2 \cdot q$, with probability at least $1-O(\frac{1}{k^3})$. 
Moreover, by Theorem~\ref{thm:clustermc}(iii), each node in graph $G$ appears in $O(\log k)$ many clusters. Hence, the flow within clusters can be implemented so that the flow through each node is $O(\log^3 k) \congsingle^2 \cdot q$.

Combining the flow routing across clusters and within  each cluster, we obtain a valid fractional flow in graph $G$ with node congestion
$O(\log n)\cdot q+O(\log^3 k) \congsingle^2 \cdot q=  O(\log^3 k) \congsingle^2 \cdot q$; here we used the fact that $\gamma\ge \log n$.

Finally, we perform simple randomized rounding to obtain an unsplittable routing. Using the fact that each demand is at most $q$ and Chernoff bound, the node congestion in $G$ remains  $O(\congsingle^2\log^3 k) q$ with probability at least $1-\frac{1}{n^3}$. 
\end{proof}

\subsubsection{Completing  Proof of Multicommodity Routing} We now combine the results from \S\ref{mc-routable}, \S\ref{mc-hallucination} and \S\ref{mc-routing-halln} to complete the proof of the routing algorithm  (Theorem~\ref{thm:mc-routing}). 
Let $D'=\sum_{i\in K} d_i$ denote the total demand of the requests in $K$ that are obtained after Algorithm~\ref{alg:cluster-mc}. By Theorem~\ref{thm:clustermc}(i), we have $D'\ge D/4$ where $D$ is the total demand of all requests in the \ncnd instance. Let $K_1\sse K$ denote all requests $i\in K$ that have both $s_i$ and $t_i$ in the {\em same} internal cluster $T\in \ti$. Let  $K_2\sse K$    denote all 
requests $i\in K$ with  both $s_i$ and $t_i$ in some heavy cluster (the source/sink can be in different clusters of $\th$). Let $K_0=K\setminus K_1\setminus K_2$ be all other requests. We use $D_1$, $D_2$  and $D_0$ to denote the total demand of the respective sets. Let $N=|\th|$ be the number of heavy clusters.  Then, we have:

\begin{claim}\label{cl:cases}
If $D_1< \frac{D'}{6}$  then $\left(D_2-\frac{N q}{4}\right) \ge \frac{D'}{6}$.
\end{claim}
\begin{proof}
 For  every request  $i\in K_0$, either  $s_i$ or $t_i$  appears as an external terminal in some cluster $T\in \ti$. This implies that $D_0$ is at most the total demand of external terminals of $\ti$. By definition of internal clusters (Definition~\ref{def:mccluster}), the external demand of  any $T\in \ti$ is less than $\load(T)/2$ and the internal demand of $T$ is at least $\load(T)/2$. Adding over all internal clusters $T\in \ti$, we obtain  $D_0\le \frac12 \sum_{T\in \ti} \load(T)$ and 
$D_1 \ge \frac{1}{2}\sum_{T\in \ti}\frac{\load(T)}2$ (the factor $2$ reduction is because each internal request contributes twice to the cluster's load). Hence, $D_0\le 2D_1$. 

By definition of heavy clusters, the total demand in each $T\in \th$ is at least $q$.  Adding over all $T\in \th$, we get $D_0+2D_2\ge q N$, where we used that the sum of demands over $\th$ is at most $2D_2+D_0$.
  
Now, using $D'=D_0+D_1+D_2$ and $D_0\le 2D_1$, we obtain $D_2\ge D'-3D_1 > \frac12 D'$ as $D_1< \frac{D'}{6}$. Also,  $D_0\le 2D_1 <\frac13 D' <\frac23 D_2$. Now we have $ N q \le D_0 + 2D_2 <\frac83 D_2$. So, $D_2-\frac{ N q}{4} > D_2-\frac23 D_2 > \frac{ D'}6$, which proves the claim. 
\end{proof}

{\bf Large internal demand.}  We first consider the easier case that $D_1\ge D'/6$. So, 
step~\ref{step:route-easy} applies in Algorithm~\ref{alg:routing-mc}. Here, our solution $\cE=\ti$ and requests $K'=K_1$. Note that each cluster $T\in \ti$ can support all its internal demands on the tree corresponding to $T$ with node-congestion $\load(T)$. The cost of \cE is $O(\costsingle \log k)\cdot \opt$ by Theorem~\ref{thm:clustermc}(v). Moreover, each node of $G$ appears in $O(\log k)$ many clusters  (Theorem~\ref{thm:clustermc}(iii)) and $\load(T)=O(\congsingle^2 \log k)q$ for each cluster $T$ (Theorem~\ref{thm:clustermc}(iv)). So the node-congestion of this routing is $O(\congsingle^2 \log^2 k)q$. 

{\bf Large external demand.}  We now consider the  case that $D_1<  D'/6$. Here, our solution $\cE=\cH\cup \th$ where \cH is the solution to the hallucinated instance in step~\ref{step:route-halln}. The requests $K'$ are those obtained in Theorem~\ref{thm:mc-mincut}, which implies  $K'\sse K_2$ and its total demand is at least $D_2 - \frac{N q}{4}\ge D'/6$ (the last inequality is by Claim~\ref{cl:cases}). By Theorem~\ref{thm:mc-unit}, the cost of \cH is $O(\log n)\cdot \opt$ with probability   $1-\frac{1}{n^2}$. And, the cost of $\th$ is $O(\costsingle \log k)\cdot \opt$ by Theorem~\ref{thm:clustermc}(v). So the total cost of \cE is $O(\costsingle \log k)\cdot \opt$, where we use $\costsingle \ge \log n$. By Lemma~\ref{lem:mc-route-cong}, with probability  $1-\frac{1}{k^2}$, requests $K'$ can be routed  with  node-congestion $O(\congsingle^2 \log^3 k)q$. 

Thus, in either case, we have with probability at least $1-\frac{1}k$:
\begin{itemize}
\item  the cost of solution \cE is $O(\costsingle \log k)\cdot \opt$. 
\item requests $K'$ can be routed in \cE with node-congestion $O(\congsingle^2 \log^3 k)q$. 
\item the total demand in $K'$ is at least $D'/6\ge D/24$.
\end{itemize}

\subsection{Wrapping Up}
Our algorithm for \ncnd  invokes Algorithms \ref{alg:cluster-mc} and \ref{alg:routing-mc} iteratively $O(\log k)$ many times. By 
Theorem~\ref{thm:mc-routing}, each iteration results in  cost $O(\costsingle \log k)\cdot \opt$, node-congestion $O(\congsingle^2 \log^3 k)q$, and routes a constant fraction of the total remaining demand. Hence, after $O(\log D)$ iterations, we would have routed all the demands. As described in Appendix~\ref{app:poly-q}, we can ensure that $D$ is polynomial in $k\le n$. So, the number of iterations is $O(\log k)$. The final cost is $O(\costsingle \log^2 k)\cdot \opt$ and node-congestion is $O(\congsingle^2 \log^4 k)q$. By Theorem~\ref{thm:single-sink}, we have $\costsingle = O(\log^2n)$ and  $\congsingle=O(\log^3 n)$, which  implies that our cost is $O(\log^2n \log^2k)\cdot \opt$ and node-congestion is $O(\log^6n \log^4k)\cdot q$. This completes the proof of 
Theorem~\ref{thm:multicomm}.

\section{Conclusions}
In this paper, we obtained the first poly-logarithmic bicriteria approximation algorithm  for the uniform node-capacitated network design problem. There is still a large gap between our approximation bounds and the $\Omega(\log k)$ hardness of approximation that follows from node-weighted Steiner tree. Closing this gap is an interesting open question. Another question concerns non-uniform capacities, which is also open in the edge-capacitated case.  
A key challenge in dealing with non-uniform capacities is that there is no clear notion of how much demand should be aggregated in one cluster. 

\bibliographystyle{plain}
\bibliography{hallucination}

\appendix
\section{Simple Reductions for \ncnd and \eevrp}\label{app:reduction}

\subsection{Reducing \eevrp to \ncnd }\label{app:eevrp}
Here, we show that the energy-efficient routing problem can be reduced to the capacitated network design problem. This essentially follows from \cite{AndrewsAZ10}, but we provide the details for completeness. 
 
Consider any instance of \eevrp with energy function~\eqref{eq:energy}, graph $G=(V,E)$ having node-costs $\{c_v\}_{v\in V}$ and requests $\{(s_i,t_i,d_i)\}_{i=1}^k$. Let $q:=\sigma^{1/\alpha}$. Let $D_1=\{i\in  [k] : d_i\le q\}$ denote all requests with demand at most $q$, and $D_2=[k]\setminus D_1$ denote the ``large'' demands. We will handle these two demand types separately. Let \opt denote the optimal value. 

For small demands ($D_1$) we define an \ncnd instance on graph $(V',E')$ where $V'$ contains $k$ copies of each node in $V$. For each $(u,v)\in E$, there is an edge in $E'$ between every copy of $u$ and every copy of $v$.  
For any node $v\in V$ and $i\in [k]$, the $i^{th}$ copy of $v$ in $V'$ has cost $c_v\sigma( i^\alpha- (i-1)^\alpha)$. The requests remain the same (we can use any copy of the source/sink nodes). We claim that the optimal value of the \ncnd instance is at most $2^\alpha\cdot \opt$.  
To see this, consider the same routing as for the optimal \eevrp solution. If the flow through any node $v$ is $x$ then we include the {\em first} $\lceil x/q\rceil$ copies of $v$ into the \ncnd solution. Note that the cost of these copies of node $v$ is:
$$c_v \sigma  \lceil x/q\rceil^\alpha \le c_v \sigma \left( 1 + x/q \right)^\alpha = c_v (q+x)^\alpha \le c_v 2^\alpha (q^\alpha +x^\alpha) = 2^\alpha \cdot c_v\cdot f(x),$$
which is $2^\alpha$ times the cost of node $v$ in the \eevrp solution. 
Adding over all nodes, the total cost of this \ncnd solution is at most $2^\alpha\cdot \opt$.

Let $U\sse V'$ denote a $(\beta,\gamma)$ bicriteria approximate solution for \ncnd. Then, there is a feasible multicommodity flow that uses capacity at most $\gamma\cdot q$ on each node of $U$. By solving the natural LP and random rounding, we can also find (in polynomial time) a flow with load at most $L:=O(\gamma+\log n)\cdot q$ on each node of $U$. We now bound the \eevrp cost of this flow. Consider any node $v\in V$ that has $i$ copies selected in $U$: so the total cost of these nodes is $c_v \sigma i^\alpha$. The energy cost on $v$ is at most 
$$c_v(\sigma +    (iL)^\alpha) = O(\gamma^\alpha + \log^\alpha n)\sigma i^\alpha\cdot c_v.$$
Adding over all $v\in V$, the total energy cost is $O(\gamma^\alpha + \log^\alpha n)$ times the \ncnd cost, which is $O(\gamma^\alpha\cdot \beta)\cdot \opt$. Here, we assumed that $\gamma\ge \log n$.

For large demands ($D_2$) we just find an unsplittable routing that 
 minimizes the $\alpha^{th}$ power of loads. Let $\I'$ denote this problem instance. Note that this problem  differs from \eevrp only in the definition of the energy function, which is now $f'(x)=x^\alpha$  instead of $f(x)=\sigma+x^\alpha$. There is an $\alpha^\alpha$-approximation algorithm for this problem~\cite{MakarychevS18}.  Clearly, the optimal value of $\I'$  is at most \opt. Let $\tau$ denote an approximate solution to $\I'$ and $U\sse V$ denote the nodes carrying positive flow.  
As every request in $D_2$ has demand at least $q$ (and we have unsplittable flows),  every node in $U$ has flow {\em at least }$q$. Using the fact that $f(x)\le 2f'(x)$ for all $x\ge q$, it now follows that the \eevrp cost of $\tau$ is at most {\em twice} the $f'$-cost of $\tau$, i.e. at most $2\alpha^\alpha\cdot \opt$.

Combining the routing for $D_1$ and $D_2$ completes the proof of Theorem~\ref{thm:reduction}. 

\subsection{Approximation ratio relative to splittable routing}\label{app:splittable}
In our node-capacitated network design problem (\ncnd and \scnd), our goal is to find  an {\em unsplittable} routing of each demand. We now observe that our approximation guarantees in Theorems \ref{thm:single-sink} and \ref{thm:multicomm} are stronger, and hold relative to an optimal solution that only supports a {\em splittable} (i.e., fractional) flow of the demands.  

For \scnd, we only use the optimal solution \opt in Lemma~\ref{lem:cluster}. Observe that this relies on applying the confluent flow result (Theorem~\ref{thm:conflu}) to the optimal solution, which only requires a splittable flow.  

For \ncnd, we use the optimal solution \opt in the following steps:
\begin{itemize}
\item Bounding the optimal cost of the \scnd instances ${\cal I}_1$ and ${\cal I}_2$ (Lemmas~\ref{lem:lowcostmc1} and \ref{lem:lowcostmc2}). Here, we only use \opt to demonstrate a feasible fractional routing for ${\cal I}_1$ and ${\cal I}_2$: so we can also use a splittable-routing solution. 
\item Bounding the optimal cost of the hallucinated instance (Lemma~\ref{lem:mc-LP}). Here, we used the optimal paths $P^*_i$  to route a random quantity $B_i$ between $s_i$ and $t_i$, for each request $i\in [k]$. Then, we used a Chernoff bound to prove that the node congestion is $O(\log n)$ with high probability. Instead of an integral path $P^*_i$, we can use a fractional flow as a distribution ${\cal F}_i$ over $s_i-t_i$ paths for each $i\in [k]$. Then, we can first sample a random path $P_i$ from ${\cal F}_i$ and route the $B_i$ units on this path $P_i$. Again, a Chernoff bound can be used to prove that the node congestion is $O(\log n)$ with high probability. So, this step can also be carried out relative to an optimal splittable routing.
\end{itemize} 

\subsection{Reducing edge-costs to node-costs}\label{app:edge}
Here, we observe that the node version \eevrp is more general than the edge-version studied previously~\cite{AndrewsAZ10,Medalg,AntoniadisIKMNPS14}. Consider an instance of energy-efficient routing with edge energy costs~\eqref{eq:energy}, graph $G=(V,E)$ having edge-costs $\{c_e\}_{e\in E}$ and requests $\{(s_i,t_i,d_i)\}_{i=1}^k$. We define an \eevrp instance on the graph $G'$ obtained by subdividing each edge $e\in E$ with a node $v_e$. The node costs are zero for all nodes of $V$ and $c_e$ for each node $v_e$ (for $e\in E$). This \eevrp instance is clearly equivalent to the original edge-version.

A similar reduction shows that edge-capacitated \ncnd is a special case of the node-capacitated problem studied in this paper.

\section{Missing Details from Section~\ref{sec:mc}}

\subsection{Approximation Algorithm for Small $q$}\label{app:small-q}
Here, we provide a bicriteria approximation algorithm for \ncnd when $q$ is small. The idea is essentially the same as that used in Theorem~\ref{thm:mc-unit}. Given an \ncnd instance $\I$, consider a new instance $\I'$ where every demand equals $1$ and the goal is to select $s_i-t_i$ paths $P_i$ with node-congestion at most $q$ that minimizes the {\em sum} of all path costs, i.e. $\sum_{i=1}^k \sum_{v\in P_i} c_v$. Note that each node may be counted multiple times in this objective.  Using an optimal solution to $\I$ as a feasible solution to $\I'$, we have  $\opt(\I')\le q\cdot \opt(\I)$. We now write an LP relaxation for $\I'$ that is just \eqref{lp:cap}  used in Theorem~\ref{thm:mc-unit}, where $\cM=[k]$ and the right-hand-side in constraint~\eqref{f:constrainttwo} is $q$. It is clear that this is a valid relaxation. The rounding algorithm is the same as described in Theorem~\ref{thm:mc-unit}. The same analysis implies that we obtain a solution to $\I'$ of cost $O(\log n)\cdot \opt(\I')\le O(q\log n)\cdot \opt(\I)$ and node-congestion $O(q \log n)$. Using this as a solution to $\I$, the cost remains $O(q\log n)\cdot \opt(\I)$ and the node-congestion increases by at most a factor $q$. So we obtain an $(O(q\log n), O(q\log n))$ bicriteria approximation algorithm for \ncnd. 

We remark that the case when every demand is large, i.e., $\min_i d_i = \Omega(q)$, can also be solved by this approach. We just uniformly scale all demands and capacity by $\min_i d_i$ so that the new capacity $q'=O(1)$. Then, we obtain $(O(\log n), O(\log n))$ bicriteria approximation algorithm for \ncnd with large demands. 
\subsection{Ensuring Polynomially Bounded Demands}\label{app:poly-q}
Here, we show that the total demand $D=\sum_{i=1}^k d_i$ can be ensured to be polynomial in the number of requests $k$.    Let $D_1\sse [k]$ denote all requests with demand at least $q/k$, and $D_2=[k]\setminus D_1$. We will handle the requests in $D_1$ and $D_2$ separately. We round-up the demand $d_i$ of each $i\in D_1$ to an integer multiple of $q/k$, which increases each demand by at most a factor two. Then, scaling all demands down by $q/k$, we obtain an equivalent \ncnd instance with capacity $q' = O(k)$, which implies total demand in this instance is $D'=O(k^2)$.  
For requests in $D_2$, we just use the minimum node-weighted Steiner forest, which admits an $O(\log k)$-approximation algorithm~\cite{KleinR94}.

\section{Algorithm for LLSC}\label{app:llsc}
\def\C{{\cal C}}

We now outline the proof of Theorem~\ref{thm:llsc}, which  is based on the algorithm in Section 5 of \cite{BabenkoGGN12}. Recall that instance of LLSC consists of a set system $(U, {\cal C})$ with element-costs $\{c_v :v\in U\}$, bound $p\ge 1$, and two subsets: {\em required} elements  $W\sse U$ and {\em capacitated} elements $L\sse U$. The goal is to find a set cover ${\cal C'} \subseteq {\cal C}$ for the required elements $W$ (i.e. $\cup_{S\in {\cal C'}} S \supseteq W$) such that each capacitated element $e\in L$ appears in at most $p$ sets of ${\cal C'}$ and the cost $c({\cal C}') =\sum_{S\in {\cal C}'} \sum_{v\in S}c_v$  is minimized.

Let $\epsilon=1/(24\rho)$ where $\rho$ is the approximation ratio for the min-ratio oracle. We assume that we are given  an upper bound $T$ on the cost of the solution, and the goal is to find (if possible) a solution of cost at most $T$. (In order to find the min-cost solution, we can then perform binary search on $T$.)   The algorithm proceeds in several rounds $t=1,2,\cdots$, where one set from $\C$ is added to the solution in each round. In each round $t$, we define the following quantities:
\begin{itemize}
\item let $X\sse W$ be the  set of required elements covered in previous rounds.
\item for each capacitated element $e\in L$, let $A_t(e)$ be the number of times $e$ has been covered in previous rounds.
\item the potential in round $t$ is $\Phi(t)=\sum_{e\in L} (1+\epsilon)^{A_t(e)/p}$. 
\item the $f$-cost of any capacitated element $v\in L$ is $f_v = (1+\epsilon)^{A_t(v)/p}\left((1+\epsilon)^{1/p}-1\right) $; the $f$-cost of any other element $v\in U\setminus L$ is $f_v=0$. These costs correspond to bounding the $\ell_\infty$ norm of the loads on capacitated elements.
\item the $d$-cost of any element $v\in U$ is $d_v= \frac{\epsilon  \Phi(t)}{T}\cdot c_v$. These costs correspond to minimizing the $\ell_1$-norm of the original cost. The scaling factor of $\frac{\epsilon  \Phi(t)}{T}$ is used to normalize the contributions from the $\ell_\infty$-norm (of loads) and $\ell_1$-norm (of original costs).
\item the combined cost of any element $v\in U$ is $\eta_v = f_v +d_v$.  
\end{itemize}
We then select a set $S\in \C$ that approximately minimizes the ratio of its combined cost to the number of newly covered elements:
$$\frac{\sum_{v\in S} \eta_v}{|S\cap(W\setminus X)|}.$$
This is exactly the min-ratio problem, for which we have a $\rho$-approximation algorithm.

The analysis in \cite{BabenkoGGN12} shows that if there exists a solution of cost at most $T$ and load at most $p$ (on capacitated elements) then  the above algorithm has cost at most  $O(\rho \, \log|U|)\cdot T$ and load at most $O(\rho \, \log|U|)\cdot p$. Combined with a binary-search over the cost bound $T$, this completes the proof of Theorem~\ref{thm:llsc}.
\end{document}